\newtheorem{proposition}{Proposition}[section]
\newtheorem{corollary}{Corollary}[section]
\newcommand{\mpid}{m_{\pi_D}}
\newcommand{\mx}{m_{X}}
\newcommand{\mX}{\mx}
\newcommand{\tpid}{\tau_{\pi_D}}
\begin{document}

\title{Dark showers from sneaky dark matter}
\author[a]{Adri\'an Carmona,}
\author[b]{Fatemeh Elahi,}
\author[c,d]{Christiane Scherb,}
\author[b]{and Pedro Schwaller}
\affiliation[a]{CAFPE and Departamento de F\'isica Teórica y del Cosmos,\\ Universidad de Granada, E18071 Granada, Spain}
\affiliation[b]{PRISMA$^+$ Cluster of Excellence \& Mainz Institute for Theoretical Physics,\\ Johannes Gutenberg University, 55099 Mainz, Germany}
\affiliation[c]{Berkeley Center for Theoretical Physics, Department of Physics, University of California, Berkeley, CA 94720, USA}
\affiliation[d]{Theoretical Physics Group, Lawrence Berkeley National Laboratory, Berkeley, CA 94720, USA}
\emailAdd{adrian@ugr.es}\emailAdd{felahi@uni-mainz.de}\emailAdd{cscherb@lbl.gov}\emailAdd{pedro.schwaller@uni-mainz.de}
\abstract{
We present a minimal composite dark matter model, based on a $SU(N_d)$ dark sector with $n_f$ dark quarks and a heavy t-channel mediator. For $n_f\geq 4$, the dark flavor symmetry guarantees the stability of a subset of the dark pions, which serve as our dark matter candidates. Their relic abundance is determined by co-scattering or co-annihilation with the remaining dark pions, which are unstable and decay. Due to their degenerate masses, the annihilation cross section is suppressed at low temperatures, thereby avoiding stringent constraints from indirect detection and opening up the GeV mass window. The decaying dark pions are naturally long lived. We obtain limits on the model from semi-visible or emerging jet searches and estimate the reach of future probes. 
}

\preprint{MITP/24-083}
\maketitle

\section{Introduction}

Dark matter (DM) remains one of the most significant open problems in modern particle physics. While it provides the most compelling explanation for a wide range of very precise astrophysical and cosmological observations, from the rotation curves of galaxies to the fluctuations in the cosmic microwave background, its microscopic nature remains largely unknown. In analogy with the visible sector, which features many more particles than are necessary to explain the atoms and photons we interact with daily, it is reasonable to assume that DM is part of a larger set of particles, referred to as a dark sector. 

Realistic models should at least explain why DM is stable, how it achieves the observed abundance, and why it has not yet been detected. In this work, we present a DM model based on a dark sector with a new confining gauge force. DM stability is ensured by an unbroken flavor symmetry in the dark sector, which makes some of the lightest dark sector states, the dark pions, stable. The relic abundance of DM is determined by annihilation into unstable dark pions (termed \emph{transient} dark pions) of the same mass, providing a simple realization of the impeded DM paradigm~\cite{Kopp:2016yji}. This mechanism introduces a velocity dependence to the annihilation rate, suppressing late-time annihilations and thereby \textit{sneakily} circumventing stringent constraints from indirect detection. Consequently, the model is viable for DM masses in the GeV range, which are typically excluded for thermal production mechanisms. In fact after taking into account constraints from direct detection, the preferred regions of parameter space fall either between $1-10$~GeV or above 200~GeV, with transient dark pion lifetimes required to be macroscopic ($c\tau > 1$~mm) to avoid overly strong bounds from direct detection.

Composite DM arising from pion-like states in confining dark sectors has been extensively studied (e.g.,~\cite{Essig:2009nc,Bhattacharya:2013kma,Cline:2013zca,Hochberg:2014kqa,Harigaya:2016rwr,Kopp:2016yji,Berlin:2018pwi,Beauchesne:2018myj,Beauchesne:2019ato,Bernreuther:2019pfb,Contino:2020god,Chu:2024rrv,Garcia-Cely:2024ivo})\footnote{See also~\cite{Maleknejad:2022gyf,Alexander:2023wgk,Alexander:2024nvi} for work on ultralight dark pions as DM candidates.}. For instance, \cite{Beauchesne:2018myj} examines a similar setup but focuses on the case of $n_f=3$ dark quark copies, such that several couplings have to be set to zero to ensure DM stability. In contrast, we demonstrate that for $n_f \geq 4$, DM stability follows naturally from a dark flavor symmetry, which also protects the DM candidates from radiatively induced decays. Moreover, this symmetry enforces degenerate dark pion masses, which is crucial to evade stringent indirect detection constraints in the low-mass regime. 

Compared to other models, our framework has offers several attractive features. The relic abundance primarily depends on the parameters of the confining dark sector and is independent of the mass and couplings of the heavy mediator connecting the dark and visible sectors. Since DM is among the lightest states in the dark sector, it is copiously produced in dark showers, making it directly accessible in collider experiments seeking such signatures. We find that the viable parameter space of the model is probed by a combination of collider, direct detection and flavor experiments. 

In the $1-10$~GeV DM mass range, the model exhibits a rich collider phenomenology, as the dark showers contain a large number of both long lived and stable dark pions. Consequently, its signatures are a combination of emerging jets~\cite{Schwaller:2015gea,Mies:2020mzw,Linthorne:2021oiz} and semi-visible jets~\cite{Cohen:2015toa,Cohen:2017pzm}, which can be effectively targeted by merging these search strategies. We perform a recast of the semi-visible and emerging jets searches, as well as conventional multi-jet and missing energy searches, which cover the limits of vanishing and infinite lifetime, respectively. Since DM is now efficiently produced in the dark shower, the missing energy search is also sensitive for promptly decaying dark pions. Constraints are obtained in the DM-mediator mass plane to facilitate comparison with direct detection, and benchmarks for future experimental studies are proposed. 

This paper is organized as shown in the table of contents. Have fun reading it.

\section{The Model}
The proposed model comprises a strongly interacting QCD-like dark sector that interacts with the Standard Model (SM) through a scalar bi-fundamental. Specifically, in addition to the SM, we consider a $SU(N_d)$ gauge sector and $n_f$ Dirac fermions  $Q_\alpha,\, \alpha=1,\ldots,n_f$, transforming in the fundamental representation of $SU(N_d)$.  The Lagrangian governing the dynamics of the dark sector is thus far
\begin{equation}
\mathcal{L}_D = - \frac{1}{4} (G^{\mu \nu,a}_D)^2+ 
\bar Q_\alpha  i \displaystyle{\not}{D} Q_\alpha - m_{Q,\alpha\beta} \bar Q_\alpha  Q_\beta \,,
\end{equation}
where $G_D^{\mu\nu,a}$ is the dark gluon field tensor, transforming in the adjoint of $SU(N_d)$, with $a\in\{1,\ldots,N_d^2-1\}$, and $\alpha, \beta\in \{1,\ldots,n_f\}$ are dark flavor indices.  As mentioned before, we introduce also a scalar $X$ that is a fundamental of both $SU(N_d)$ and $SU(3)_c$. This scalar particle acts as the portal between the two strongly interacting sectors by enabling the interaction
\begin{equation}
\mathcal{L}_{\rm portal} = - \kappa_{\alpha i} \bar \psi_i Q_\alpha  X + \mathrm{h.c.},
\label{eq:quarkLag}
\end{equation}
where $i$ indicates SM flavor.  The quantum numbers of $X$ under SM gauge symmetries determine the form of the operator. If $X$ is a singlet of $SU(2)_L$ and has a hypercharge of $1/3$, then $\psi_i = d_R^i$~\cite{Renner:2018fhh}. If $X$ has a hypercharge of $-2/3$ while still being a singlet of $SU(2)_L$, then this portal is with right-handed up-type quarks~\cite{Carmona:2021seb,Carmona:2022jid}, $\psi_i=u_R^i$. By making $X$ a doublet of $SU(2)_L$ with hypercharge $-1/6$, $\psi$s are the left-handed quarks in the SM, $\psi_i=\mathcal{Q}_L^i$. In this paper, we focus on the evolution of dark sector, the specific form of the portal less critical for most of the discussion. However, in sections
\ \ref{sec:DM} and\ \ref{sec:collider}  we explore how different quantum numbers of $X$ affect the phenomenology of this model. 

The dark sector has a $SU(n_f)_L\times SU(n_f)_R \times U(1)$ global symmetry. A crucial assumption is that $m_{Q\alpha\beta} = m \delta_{\alpha \beta}$, i.e. that the diagonal $SU(n_f)_V\times U(1)$ subgroup that remains after chiral symmetry breaking is not further broken by the dark quark masses. This symmetry simplifies the portal interaction. 
Using singular value decomposition, the coupling matrix $\kappa$ can be rewritten as 
\begin{align}\kappa = VDU\,,
\label{eqn:vdu}
\end{align} 
where $U$ is a $3\times 3$ unitary matrix, $D$ is a $n_f \times 3$ diagonal matrix, and $V$ is a $n_f \times n_f$ unitary matrix. 
By performing a $SU(n_f)_V$ rotation, $V$ can be eliminated from Eq.~\eqref{eqn:vdu}, leaving only the first three dark quarks coupled to the SM ($\kappa_{\alpha i} = 0$ for $\alpha > 3$). Consequently, the coupling Eq.\ \eqref{eq:quarkLag} preserves a $SU(n_f-3)\times U(1)$ subgroup of the dark flavour symmetry. This symmetry ensures the stability of a subset of the dark pions, which become our DM candidates. 
Let us denote the dark quarks that don't interact with the SM by $Q_{\alpha_D}$, and the coupling associated with them as $\kappa_{\alpha_D i}$. Loop level corrections to $\kappa_{\alpha_D i}$ are guaranteed to be zero as well, due to the unbroken global symmetry.

Let us continue with our analysis of the symmetry structure.  
In the limit where $m_{Q}\ll \Lambda_D$, the $SU(n_f)_L \times SU(n_f)_R$ chiral symmetry is only softly broken. Similar to SM QCD, we assume $SU(N_d)$ undergoes confinement at the scale $\Lambda_D$, and the theory is described by dark hadrons at low energies. Given that after confinement, we have a non-vanishing expectation value $\langle \bar Q_L Q_R \rangle $ in the vacuum, the chiral symmetry is spontaneously broken to $SU(n_f)_V$. Because of spontaneous symmetry breaking, we have $n_f^2 -1$ pseudo Nambu Goldstone Bosons (pNGB) with a mass proportional to the dark quark masses. If we do not have any additional breaking of the dark flavor symmetry ($m_Q \equiv m_{Q,\alpha \beta}$ for all $\alpha$ and $\beta$), all pions have the same mass at tree-level.~\footnote{At the loop level, a small mass splitting will be generated between the stable and transient dark pions.} At low energies, the dynamics of these dark pions determine the cosmological evolution of the dark sector. The Lagrangian describing the self-interaction of dark pions is given by
\begin{align}
    \mathcal{L}_{\rm dChPT}=\frac{f_D^2}{4}\mathrm{Tr}\left(\partial_{\mu}U_D \partial^{\mu}U^{\dagger}_D\right)+\frac{f_D^2 B_{D}}{2}m_{Q}\mathrm{Tr}\left(U_D^{\dagger}+U_D\right),
    \label{eq:self}
\end{align}
where $B_D$ is a constant related to the dark pion mass, $m_{\pi_D}^2=2 B_D m_Q$, whereas  the portal becomes
\begin{equation}
\mathcal{L}^{\rm portal}_{\rm dChPT}= i \frac{f_D^2}{4 m_X^2} \kappa_{\alpha i} \kappa_{\beta j}^{\ast}\left\{ \text{Tr}(c_{\beta\alpha} U^\dagger_D \partial_\mu U_D)(\bar \psi_i \gamma^\mu P_R \psi_j)+\text{Tr}(c_{\beta\alpha} U_D \partial_\mu U_D^{\dagger})(\bar \psi_i \gamma^\mu P_L \psi_j)\right\}\,,
\label{eq:pionLag}
\end{equation}
where
\begin{equation}
U_D = \text{Exp}\left[ \frac{2i}{f_D} \Pi_D\right]\,,
\end{equation}
with $\Pi_D = \pi_D^a T^a$ while $T^a$ being the generators of $SU(n_f)$, normalized as $\mathrm{Tr}(T^aT^b)=\delta^{ab}/2$. For instance, in the $n_f=3$ case,  $T^a=\lambda^a/2$ with $\lambda^a$ the Gell-Mann matrices. The decay constant of dark pions in Eqs.\ \eqref{eq:self} and \eqref{eq:pionLag} is denoted by $f_D$. The matrix $c_{\alpha\beta}$ is defined as
\begin{align}
(c_{\alpha\beta})_{\rho \lambda}=\delta_\alpha^{\rho}\delta_\beta^{\lambda}, \qquad \alpha,\beta,\rho,\lambda\in\{1,\ldots,n_f\}.
\end{align}
As discussed earlier, if $n_f>3$, some dark quarks $Q_{\alpha_D}$ have no tree-level interaction with the SM sector. 
Hence, dark pions containing $Q_{\alpha D}$ quarks and with non-trivial charges under $\mathcal{G}_{\rm DM}\equiv SU(n_f-3)\times U(1)\subset SU(n_f)_V$ will be stable. 
Taking into account the decomposition of the adjoint of $SU(n_f)_V$ with respect to its maximal subgroup $SU(3)\times SU(n_f-3)\times U(1)\subset SU(n_f)_V$, one can see that the stable dark pions transform as (see Eq.\,\eqref{eq:adj})
\begin{align}
\varphi\sim (\mathbf{3},\overline{\mathbf{n_f-3}})_{\frac{n_f a}{n_f-3}}+\mathrm{h.c},\quad \phi\sim (\mathbf{1},\mathbf{(n_f-3)^2-1})_0,\quad |a|=\sqrt{\frac{n_f-3}{6n_f}}.
\end{align}
The number of stable dark pions is therefore $2\times 3\times (n_f-3)+(n_f-3)^2-1=n_f^2-10$.  In general, we represent the stable pions by $\pi_{\rm DM}$, and the rest of dark pions by $\pi_{\rm tran}$. The $(\varphi, \phi)$ notation is only used if we have to distinguish the different representations under the DM symmetry, as in the discussion of direct detection. 
In the particular case of $n_f=4$, we have six $\pi_{\rm DM}$ (transforming as $\mathbf{3}_{\sqrt{\frac{2}{3}}}$ under $SU(3)\times U(1)$) and nine $\pi_{\rm tran}$.

Given the absence of tree-level interactions between $\pi_{\rm DM}$ and the SM sector (Eq.\ \eqref{eq:pionLag}), the self-interactions of dark pions determine the evolution of $\pi_{\rm DM}$ in the early universe
\begin{align}
\mathcal{L}_{\rm SI} =& - \frac{2}{3 f_D^2}\text{Tr}\left(\Pi_D^2  \partial_\mu \Pi_D \partial^\mu \Pi_D- \Pi_D \partial_\mu \Pi_D \Pi_D \partial^\mu \Pi_D \right)\nonumber \\
&+\frac{2 N_d}{15 \pi^2 f_D^5} \epsilon^{\mu\nu\rho\sigma} \text{Tr}\left(\Pi_D \partial_\mu \Pi_D \partial_\nu \Pi_D \partial_\rho \Pi_D \partial_\sigma \Pi_D\right)+ O(\Pi_D^6)\,. 
\label{eq:pionSI}
\end{align}
The first term in Eq.\ \eqref{eq:pionSI} comes from Eq.~\eqref{eq:self} and describes a four-point interaction between dark pions. The second line in Eq.\ \eqref{eq:pionSI} describes a five-point $\pi_D$ interaction that is known as the Wess-Zumino-Witten term~\cite{Wess:1971yu,Witten:1983tw}. Using some $\mathfrak{su}(n_f)$ algebra,  these terms can be rewritten as   
\begin{align}
\mathcal{L}_{\rm SI}^{(4)} =& - \frac{2}{3 f_D^2}\pi_D^a \pi_D^b\partial_{\mu}\pi_D^c\partial^{\mu}\pi^d_D\Bigg[\frac{1}{4}f^{acm}f^{bdm}\Bigg]\,,\\
\mathcal{L}_{\rm SI}^{(5)}=&\frac{2 N_d}{15\pi^2f_D^5}\epsilon^{\mu\nu\rho\sigma}\pi^a\partial_{\mu}\pi^b\partial_{\nu}\pi^c\partial_{\rho}\pi^d\partial_{\sigma}\pi^e\Bigg[-\frac{1}{16}\big(f^{abf}f^{cdg}d^{efg}\big)\Bigg],
\label{eq:pionSIexp}
\end{align}
where $f^{abc}$ are the $\mathfrak{su}(n_f)$ structure constants and $d_{abc}$ is a totally symmetric third rank tensor, defined in Appendix~\ref{app:sun}. By integrating by parts, the WZW term can be rewritten as follows
\begin{align}
    \mathcal{L}_{\rm SI}^{(5)}=&\frac{2 N_d}{15\pi^2f_D^5}\epsilon^{\mu\nu\rho\sigma}\sum_{a<b<c<d<e}\pi_D^a\partial_{\mu}\pi_D^b\partial_{\nu}\pi_D^c\partial_{\rho}\pi_D^d\partial_{\sigma}\pi_D^eT_{abcde},\\
  T_{abcde}&=\sum_{\pi \in S_5}\mathrm{sign}(\pi)\Delta_{\pi\big(\{a,b,c,d,e\}\big)},\qquad  \Delta_{abcde}=-\frac{1}{16}\Big(f^{abf}f^{cdg}d^{efg}\Big).
\end{align}
Among other terms, the mass term leads to a four-point contact interaction among the dark pions
\begin{align}
\mathcal{L}_{\rm m}=\frac{f_D^2 B_D}{2}m_Q \mathrm{Tr}\Big(U_D^{\dagger}+U_D\big)=-\frac{m_{\pi_D}^2}{2}\pi_D^a \pi_D^a+\frac{m_{\pi_D}^2}{6f_D^2}c_{abcd}\pi_D^a\pi_D^b\pi_D^c\pi_D^e+\mathcal{O}(1/f_D^4)
\end{align}
where
\begin{align}
c_{abcd}=\frac{1}{n_f}\delta^{ab}\delta^{cd}+\frac{1}{2}\big(d^{abm}d^{cdm}-f^{abm}f^{cdm}\big).
\end{align}
These interactions define four-point and five-point interactions among dark pions. In the particular case of $n_f=4$, on which we will focus throughout the rest of the paper, there is an accidental symmetry making $\pi_{\rm DM}$ to appear always in pairs in these self-interactions (see Appendix~\ref{ap:sym}). In general, these interactions produce an indirect portal of  $\pi_{\rm DM}$ with the SM sector by interacting with $\pi_{\rm tran}$. The cross sections that determine the relic abundance therefore only depend on $N_d, n_f, f_D $, and $m_{\pi_D}$, while the rest of the UV parameters can be ignored for now. 

The decay of $\pi_{\rm tran} $ to SM particles is given by
\begin{align}
	\Gamma(\pi_{\rm tran}^{(\alpha,\beta)}\to q_i \bar{q}_j)&=\frac{N_c f_D^2 m_{\pi_D}}{128\pi m_{X}^4}\left| \kappa_{\alpha i}\kappa_{\beta j}^{\ast}\right|^2  (m_{q_i}^2+m_{q_j}^2)\mathfrak{Q}_{ij} \\
	\Gamma(\pi_{\rm tran}^b\to q_i \bar{q}_j)&=\frac{N_c f_D^2 m_{\pi_D}}{64\pi m_{X}^4}\left|\sum_{\alpha,\beta=1}^{4} \kappa_{\alpha i}\kappa_{\beta j}^{\ast}(T^b)_{\alpha \beta}\right|^2(m_{q_i}^2+m_{q_j}^2)\mathfrak{Q}_{ij} ,
	\label{eq:partdec}
\end{align}
where $\mathfrak{Q}_{ij}$ is the kinematic factor 
\begin{align}
\mathfrak{Q}_{ij}=\left[1-\frac{\left(m_{q_i}^2-m_{q_j}^2\right)^2}{(m_{q_i}^2+m_{q_j}^2)m_{\pi_D}^2}\right] \sqrt{\left(1-\frac{\left(m_{q_i}+m_{q_j}\right)^2}{m_{\pi_D}^2}\right)\left(1-\frac{\left(m_{q_i}-m_{q_j}\right)^2}{m_{\pi_D}^2}\right)}
\end{align}
and $\pi_{\rm tran}^{(\alpha,\beta)}$, with $\alpha < \beta\in\{1,2,3\}$, are the  off-diagonal pions and $\pi_{\rm tran}^{b}$, with  $b = 3, 8,15$, the diagonal ones, with $T^a$ the generators of $\mathfrak{su}(4)$ defined in Appendix~\ref{app:sun}.  We can see that these decay widths are proportional to $\kappa_{\alpha i} \kappa_{\beta j}^{\ast}/m_X^2$, and thus depends on the exact structure of the $\kappa$ matrix and $m_X$. Barring specific cancellations, we can however assume that all decay widths are of the same order, and we therefore use a common decay rate $\Gamma_{\rm tran}$ when discussing the cosmological evolution of the dark sector.

\section{Sneaky dark matter}
\label{sec:DM}
\subsection{Relic abundance}

Similar to the SM QCD sector, our proposed dark sector consists of many dark baryons and mesons. A thorough calculation of the evolution of these particles in the early universe is well beyond the scope of our paper. 
Fortunately all heavy states annihilate efficiently or decay into dark pions and thus can be neglected in practice in the computation of relic abundance\footnote{This is in general satisfied as long as $m_{\pi_D} \ll 4 \pi f_D$, i.e. as long as the chiral EFT is perturbative. Instead all other interaction rates are non-perturbative and one can expect them to saturate the unitarity bound for DM annihilation, which suggests that the relic abundance of the heavier states is negligible as long as the masses of potentially stable states (in particular, the lightest dark baryons), is below $100$~TeV.}. Therefore we can focus on the evolution of dark pions.

As mentioned above, for the particular case of $n_f=4$, thermally averaged cross-sections involving an odd number of stable dark pions will vanish. Therefore, the relevant co-annihilation cross-section for this case is that of two stable dark pions going into two transient dark pions
\begin{align}
\langle \sigma v\rangle_{2_{\rm DM}\to 2_{\rm tran}}\stackrel{n_f=4}{=}\frac{m_{\pi_D}^2}{\pi^{3/2} f_D^4 \sqrt{x}}\left[\frac{1171}{576}\right]\approx\frac{2m_{\pi_D}^2}{\pi^{3/2} f_D^4 \sqrt{x}}\approx \sigma_0 v,\qquad \sigma_0=\frac{2m_{\pi_D}^2}{\sqrt{3}\pi^{3/2}f_D^4},
\end{align}
where $x=m_{\pi_D}/T$, see Appendix \ref{sec:ap_relic} for a detailed derivation. A notable feature here is the velocity suppression of the annihilation cross section at low temperatures, $v\propto 1/\sqrt{x}$, which arises due to the degenerate dark pion masses. This has a modest effect on the relic abundance, but is crucial for avoiding
stringent constraints from indirect detection and from late energy injection into the cosmic microwave background (see also~\cite{Kopp:2016yji}). As mentioned previously, a small mass splitting  $\Delta$ between the stable and transient dark pions is generated at the loop level. This will not change the results provided that $\Delta/m_{\pi_D}\ll 1$, which is expected in our case. 

The WZW term induces five-point interactions among dark pions which give rise to $3\to 2$ processes. They scale as
\begin{align}
\langle \sigma v^2\rangle_{3\to 2}=\frac{25\, k\,  m_{\pi}^5 N_d^2}{32\sqrt{5}\pi^5 f_D^{10} N_{\rm DM}^3 x^2}\,,
\end{align}
where  $k=\mathcal{O}(1)$ is a combinatorial factor and $N_{\rm DM}=6$ (the reader can find explicit expressions in Appendix~\ref{sec:ap_relic}).
The full Boltzmann equations (BEs) governing the evolution of dark matter pion ($n_{\rm DM}$) and transient dark pions ($n_{\rm tran}$) are presented in Appendix~\ref{sec:ap_relic}. As one expects, these BEs are highly coupled and include many terms that may effect the evolution of number densities. However, let us make the following remarks which simplify the BEs significantly. 
\begin{itemize}
 \item[-] We have explicitly checked that $2\to 2$ processes are more efficient than the $3\to 2$ ones, for all values of $m_{\pi_D}$, $f_{D}$ and $x$ under consideration. Therefore, we will in the following safely ignore the contribution of $3\to 2$ processes.

 \item[-] To ensure kinetic equilibrium between the SM and dark sector, we require the decay widths of transient dark pions to be larger than the Hubble rate: $\Gamma(\pi_{\rm tran}) \gg H $, with the Hubble rate being defined as $ H = 1.66 \sqrt{g_\star} T^2/M_{\rm Pl}$, where $g_\star$ is the effective degree of freedom, and $M_{\rm Pl}$ being the Planck mass.
 To be conservative we enforce it for $T= m_{\pi_D}$, and we get $c\tau (m_{\pi_D}/\mathrm{GeV})^2 \ll c \hbar M_{\rm Pl}/(1.66 \sqrt{g_\star})\simeq 100 \,\mathrm{m}$,   where $c\tau$ is the lifetime of the transient dark pion. The region to the right of the purple lines in Fig. \ref{fig:relicDD} violates this condition for $c\tau = 10$ cm and $c\tau=1$ mm. This region is not necessarily excluded, but rather requires a more careful analysis of the relic abundance.  

 \item[-] Due to the fast decay of $\pi_{\rm tran
    }$ compared to the Hubble rate, for $T < m_{\pi_D}$ the abundance of transient dark pions is Boltzmann suppressed. Therefore, the relic abundance of dark pions is solely due to annihilation of dark pions to transient dark pions while the back reactions are negligible.
\end{itemize}
Given the assumptions above, the evolution of $n_{\rm DM}$ simplifies to 
\begin{align}
&\dot{n}_{\rm DM} + 3 H n_{\rm DM}= -\langle \sigma v\rangle_{2_{\rm DM}\to 2_{\rm tran}}\left[n_{\rm DM}^2-(n_{\rm DM}^2)_{\rm eq} \right],
\label{eq:BEsimplified}
\end{align}
where $(n_{\rm DM})_{\rm eq} =\frac{ Tm_{\pi_D}^2}{2\pi^2}  K_2 \left(\frac{m_{\pi_D}}{T}\right)$ is the equilibrium number density. It is more convenient to convert $n_{\rm DM}$ to dimensionless variable $Y_{\rm DM} = n_{\rm DM}/s$, where $s = \frac{2 \pi^2}{45} g_{\star} T^3$ is the entropy density. Assuming adiabatic expansion (i.e., $ s a^3$ is constant), the BE describing the evolution of $\pi_{\rm DM}$ abundance is
\begin{equation}
    \frac{dY_{\rm DM}}{dx} =- \frac{\langle \sigma v\rangle s }{x H} \left( 
Y_{\rm DM}^2 - (Y_{\rm DM}^2)_{\rm eq} \right),
\end{equation}
where $(Y_{\rm DM})_{\rm eq} = (n_{\rm DM})_{\rm eq}/s$. The simple form of $\sigma_0$ results in a straightforward relation between $m_{\pi_D}$ and $f_D/m_{\pi_D}$ for which the observed relic abundance is obtained. It is shown in Fig.~\ref{fig:relicDD} together with other constraints. The solid red line corresponds to dark pions constituting all of DM, and the dashed line is when they constitute 10\% of the total DM in the universe. The gray region is excluded due to overproduction of DM.

\subsection{Direct detection}
\label{sec:DD}

The expected rate of DM scattering off of a nucleus in  direct detection experiments is:
\begin{equation}
    \frac{dR}{dE_R} = \frac{\rho_{\rm DM} M_{\rm det}}{m_A m_{\rm DM}}\int_{v_{\rm min}}^{v_{\rm esc}}{v f(v) \frac{d\sigma}{d E_R} dv},
\end{equation}
where $\rho_{\rm DM}$ is the local DM density, $M_{\rm det}$ is the mass of the detector, $m_A$ and $ m_{\rm DM}$ are respectively the nucleus and DM mass, $f(v)$ is the normalized DM velocity distribution, $\sigma$ is the scattering cross section of DM with the nucleus, and  $E_R$ is the recoil energy. The bounds of the integral $v_{\rm min}$ and $v_{\rm esc}$ correspond to the minimal velocity required to induce a recoil of energy $E_R$, and the maximum velocity to be bound to the potential well of Milky Way, respectively. 
The DM-nucleus interaction is given by~\cite{Schumann:2019eaa}
\begin{equation}
    \frac{d\sigma}{d E_R} = \frac{m_A}{2 v^2 \mu^2} \left( \sigma_{\rm SI} F^2_{\rm SI} (E_R) + \sigma_{\rm SD} F^2_{\rm SD} (E_R) \right),
\end{equation}
where $\mu= m_{\rm DM} m_A/ (m_{\rm DM} + m_A)$ is the reduced mass of DM-nucleus system. The classification of DM cross section with nucleons as spin-independent (SI) or spin-dependent (SD) is determined by the nature of the DM-nucleon interaction.\footnote{Generally, the inclusion of $ \gamma^5 $ (axial current) indicates an SD cross section.} The form factors $ F_{\rm SI} $ and $ F_{\rm SD} $ characterize the coherence of DM interaction with nuclei\footnote{At small momentum transfer , contributions from all nucleon partial waves combine, resulting in coherent scattering of DM with the entire nucleus. As the momentum transfer increases, the de Broglie wavelength of the DM decreases, leading to interactions that involve only a portion of the nucleus. This loss of coherence is accounted for by the form factors $ F_{\rm SI} $ and $ F_{\rm SD} $.} and can be calculated based on the properties of the target nuclei~\cite{Schumann:2019eaa, Helm:1956zz}. Nonetheless, direct detection experiments present the limit as a function of $\sigma_{\rm SI}$ and  $\sigma_{\rm SD}$, alleviating the need for phenomenologists to address these form factors.

In our model, the scattering of dark pions with cold nucleons in direct detection experiments is governed by Eq.\ \eqref{eq:pionLag}.  Expanding $U_D^{\dagger} \partial_{\mu} U_D$ and $U_D \partial_{\mu} U_D^{\dagger}$ to the second order in $\pi_D$ yields in both cases $2/f_D^2 \big[ \Pi_D,\partial_{\mu}\Pi_D\big]$ and thus
\begin{align}
\mathcal{L}_{\rm dChPT}^{\rm portal} \supset  
&-\frac{1}{2m_X^2} \kappa_{\alpha i} \kappa_{\beta j}^{\ast}(T^c)_{\alpha\beta}f^{abc}\pi_D^a \partial_{\mu}\pi_D^b (\bar \psi_i \gamma^\mu \psi_j),
\label{eq:pionscattering}
\end{align}
where we have used that $\mathrm{Tr}(c_{\beta\alpha} T^a)=(T^a)_{\alpha\beta}$ and the $\mathfrak{su}(n_f)$ algebra. Henceforth, we will work in the dark flavor basis where $\kappa_{\alpha i}=0$, for $\alpha>3$. This corresponds to $\kappa=D\cdot U$ with $D$ a diagonal $4\times 3$ matrix and $U$ a $3\times 3$ unitary matrix. Recall that depending on the quantum numbers of $X$, $\psi_i$ can be any of $\mathcal{Q}_L^i, u_R^i$, or $d_R^i$. 
 Hence, dark pions interact with both
$$(\bar q \gamma^\mu q) \hspace{0.5 in} \text{and} \hspace{0.5 in}  (\bar q \gamma^\mu \gamma^5 q),\qquad q=(u,d,s)^T.$$
For the model at hand, we know that the stable dark pions form a (complex) triplet of $SU(3)\subset SU(4)_V$, that we can call $\varphi$. Then, the equation above leads to
\begin{align}
\mathcal{L}_{\rm dChPT}^{\rm portal} \supset  
&-\frac{1}{4m_X^2} \kappa_{m i} \kappa_{\ell j}^{\ast} (\varphi^{\ast}_\ell i
\overset{\leftrightarrow}{\partial_{\mu} }\varphi_m )\bigg(\bar q \gamma^\mu \bigg[c_{Rij}^{\psi}P_R+c_{Lij}^{\psi}P_L\bigg]q \bigg),
\end{align}
where $m,\ell \in\{1,2,3\}$ and 
\begin{align}
(c_{Rij}^{u_R})_{m\ell}&=\left\{\begin{array}{ll} \delta_{i}^m\delta_j^\ell&i,j=1\\
0&\mathrm{otherwise}\end{array}\right.,\quad (c_{Rij}^{d_R})_{m\ell}=\left\{\begin{array}{ll} \delta_{i+1}^m\delta_{j+1}^\ell&i,j=1,2\\
0&\mathrm{otherwise}\end{array}\right.,\quad (c_{Rij}^{\mathcal{Q}_L})_{m\ell}=0,\\
(c_{Lij}^{u_R})_{m\ell}&=0,\quad(c_{Lij}^{d_R})_{m\ell}=0,\quad (c_{Lij}^{\mathcal{Q}_L})_{m\ell}=\left\{\begin{array}{ll} \delta_{i+1}^m\delta_{j+1}^\ell+\delta_i^m\delta_i^\ell&i,j=1,\\\delta_{i+1}^m\delta_{j+1}^\ell & i=2\lor j=2\\
0&\mathrm{otherwise}\end{array}\right. .
\end{align}

If we assume that the velocity of the incoming $\pi_D$ is negligible compared to the masses (i.e.,
$s = (m_{\pi_D} + m_N )^2$),
we can neglect axial-vector interactions because they lead to velocity-suppressed contributions to the direct detection cross section. In this case, the relevant interactions with the nucleons $N=p,n$ read
\begin{align}
\mathcal{L}_B\supset -\frac{1}{8m_X^2}\big\{ 2 \bar{p}
\gamma^{\mu}p+\bar{n}\gamma^{\mu}n\big\} \kappa_{m1}\kappa_{\ell 1}^{\ast}(\varphi_\ell^{\ast}i\overset{\leftrightarrow}{\partial_{\mu} }\varphi_m),
\end{align}
for the $\psi_i=u_R^i$ case, and the same after exchanging $n\leftrightarrow p$ for both  $\psi_i=d_R^i$ and $\mathcal{Q}_L^i$. This interaction leads to the following cross section with the nucleons 
\begin{align}
\sigma_{N}=F_N^2 \frac{\mu^2_N}{\pi},\quad F_p^2=\sum_{m\ell}\frac{|\kappa_{\ell 1}\kappa_{m1}^{\ast}|^2}{48m_X^4}=4F_{n}^2
\end{align}
for the $\psi_i=u_R^i$ case, where $\mu_N$ is the reduced mass of DM-Nucleon system, with a similar result for the $\psi_i=d_R^i,\mathcal{Q}^i_L$ after the exchange $n\leftrightarrow p$. The SI cross section is given by~\cite{Schumann:2019eaa}:
\begin{equation}
    \sigma_{\rm SI} = \sigma_N \frac{\mu^2}{\mu_N^2} \left( (A-Z) + F_{p}/F_{n} Z\right)^2,
\end{equation}
where $A$ and $Z$ are respectively atomic and proton number.  The resulting SI cross section is 
\begin{equation}
\sigma_{\rm SI}=\sum_{m\ell} \frac{\left|\kappa_{\ell1}\kappa^{\ast}_{m1}\right|^2}{192 \pi m_X^4}\mu^2
\begin{cases} 
\left(A+Z\right)^2,&  \psi=u_R \\  \\
\left(2A-Z\right)^2,& \psi=d_R, \mathcal{Q}_L
\end{cases}.
\label{eq:DDxsectionSI}
\end{equation}
As Eq. \eqref{eq:DDxsectionSI} illustrates, $\sigma_{\rm SI}$ scales as $A^2$. Instead $\sigma_{\rm SD}$ is proportional to the spin of nucleus, which is $\mathcal{O}(1)$~\cite{Engel:1992bf}, and therefore not enhanced by the atomic number. We therefore neglect $\sigma_{\rm SD}$ and focus solely on the constraints arising from $\sigma_{\rm SI}$ in the following.

Using the current best limits on direct detection~\cite{DirectDetection,SENSEI:2023zdf}, we establish the constraints on our model, as illustrated by the green lines in Fig.~\ref{fig:relicDD}. Here the direct detection rates are rescaled with the actual abundance of $\pi_{\rm DM}$ for each point. 
 We assume that $\kappa_{ij}$ is uniform for all values of $i, j = \{1, 2, 3\}$, and we substitute $m_X$ with the lifetime of the dark pion using Eqs.~\eqref{eq:partdec}. The dashed green line represents the direct detection bound for $c\tau = 1$ mm, while the solid green line corresponds to the case where $c\tau = 10$ cm. 
 Notably, the direct detection bounds exhibit sudden fluctuations around masses on the order of $\mathcal{O}(1)$ GeV. This is primarily due to our choice of showing contours of constant $c\tau $, which lead to rapid changes in the underlying model parameters $\kappa$ and $m_X$ when the dark pion mass crosses certain thresholds. 
 Similarly, the direct detection constraints for a dark pion mass of $\mathcal{O}(100)$ GeV, with the choice of $\psi_i = d_R^i$, are stronger because the phase space for the dark pion with $\psi_i = u_R^i$ near the top mass is more constrained. The selection of $c\tau$ values is primarily motivated by the available phase space in the $\mathcal{O}(1)$ GeV dark pion mass range, which contributes more than \(10\%\) of the total dark matter and may be accessible in collider experiments.

\subsection{Indirect detection and CMB bounds}
\label{sec:ID}

One of the main probes of $\mathcal{O}(10-1000) \ \text{GeV}$ scale dark matter is through DM self annihilation to SM particles today (which in our case is through a cascade decay), leading to
diffuse gamma-ray production at the center of galaxies. The strongest constraints of such radiation comes from  \textit{Fermi}-Large Area satellite Telescope (\textit{Fermi}-LAT), which surveys a large fraction of the sky with unprecedented sensitivity~\cite{Fermi-LAT:2016afa}. The gamma-ray flux from the annihilation of DM has the form: 
\begin{equation}
\frac{\mathrm{d} \Phi_\gamma}{\mathrm{d E}_\gamma}  = \frac{1}{8 \pi m_{\pi_D}^2}
\underbrace{\sum_f \langle \sigma v \rangle_{\rm 2\pi_{DM}\to 2\pi_{\rm tran} \to 4 f }\frac{\mathrm{d N}_\gamma^f}{\mathrm{d E}_\gamma}
}_\text{DM model}
\times
\underbrace{\int\limits_{\Delta \Omega} \mathrm{d}\Omega^{\prime}
\int\limits_{\mathrm{los}} \rho^2\mathrm{d}l(r,
\theta^{\prime})}_{\text{Astrophysics part}}
\end{equation}

The left-hand side represents the measurable gamma-ray flux, while the right-hand side consists of two components: (a) a particle physics term that depends on the dark matter mass \( m_{\pi_D} \), the velocity-averaged annihilation cross-section \( \langle \sigma v \rangle \) to standard model (SM) particles (denoted as \( f \)), and the spectrum of gamma rays produced from the decay of dark matter to SM particles \( \left( \frac{\mathrm{d} N_\gamma^f}{\mathrm{d} E_\gamma} \right) \); and (b) an astrophysical term \( J(E) \) (referred to as the J-factor), which is defined by the line-of-sight integral of the square of the dark matter density \( \rho \).

In our model, the primary process affecting indirect astrophysical observations is \( \pi_{\rm DM} \pi_{\rm DM} \to \pi_{\rm tran} \pi_{\rm tran} \), followed by the subsequent decay \( \pi_{\rm tran} \to j j \). \footnote{It is important to note that for very light dark pions, the transient dark pion decays directly to photons, which would produce spectral lines. However, this region of parameter space is not addressed in this paper.} Since the indirect detection bounds are similar regardless of whether \( j \) represents up-type or down-type quarks~\cite{Hoof:2018hyn}, we do not differentiate between them.

As previously mentioned, dark matter co-annihilation into transient dark pions is velocity suppressed, specifically \( \langle \sigma v \rangle \simeq \sigma_0 v \). This results in weaker signals from objects with low dark matter velocities, such as dwarf galaxies (see also Refs.~\cite{Dror:2016rxc, Okawa:2016wrr, Jia:2016pbe}). Following Ref.~\cite{Elor:2015bho} for the cascade decay and Refs.~\cite{Hoof:2018hyn, Hutten:2022hud} for the current indirect detection bounds on the cross section of dark matter with SM quarks, we can derive the indirect detection bounds for our model. We assume the dark matter velocity in our galaxy to be \( 220 \) km/s. The constraint from indirect detection is illustrated by the dashed blue line in Fig.~\ref{fig:relicDD}. The region below the blue line corresponds to a more efficient annihilation cross section, and is excluded according to indirect detection.

 Similarly, observations of the cosmic microwave background (CMB) impose a significant restriction on any model in which DM can be annihilated. Specifically, the additional energy introduced into the early universe’s plasma due to DM annihilation could postpone the recombination process, leaving noticeable traces in the CMB. The influence of DM on the CMB is quantified by the “energy deposition yield”, represented as $$p_{\rm ann} = f_{\rm eff} \frac{\langle \sigma v \rangle}{m_{\pi_D}}.$$
Here, $f_{\rm eff}$ gives the efficiency with which the energy released in DM annihilation is absorbed by the primordial plasma, which here we assume 1 to obtain a more rigorous bound.  We demand that $p_{\rm ann }< 3.5 \times  10^{-28} \mathrm{cm}^3 \mathrm{s}^{-1} \mathrm{GeV}^{-1}$~\cite{Planck:2018vyg}. This constraint is illustrated by an orange region in Fig.  ~\ref{fig:relicDD}. As one can see the bounds from CMB are more stringent than those from indirect detection. This is primarily attributed to the higher velocity of DM particles during the era of recombination compared to their velocities today.

\subsection{Current astrophysical and cosmological constraints}

	\begin{figure}[!h]
		\begin{center}
	\includegraphics[width=0.502\textwidth]{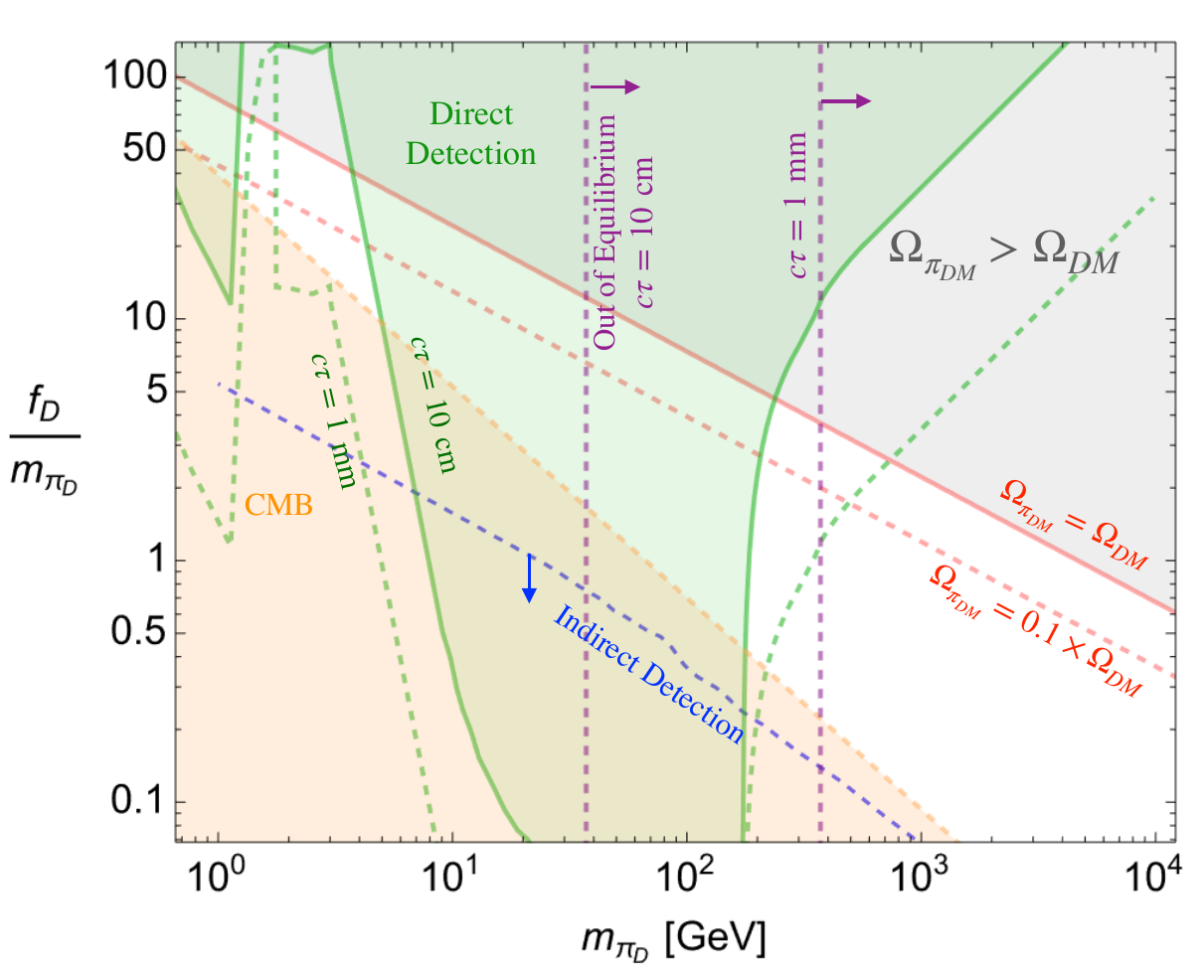}
 \includegraphics[width=0.49\textwidth]{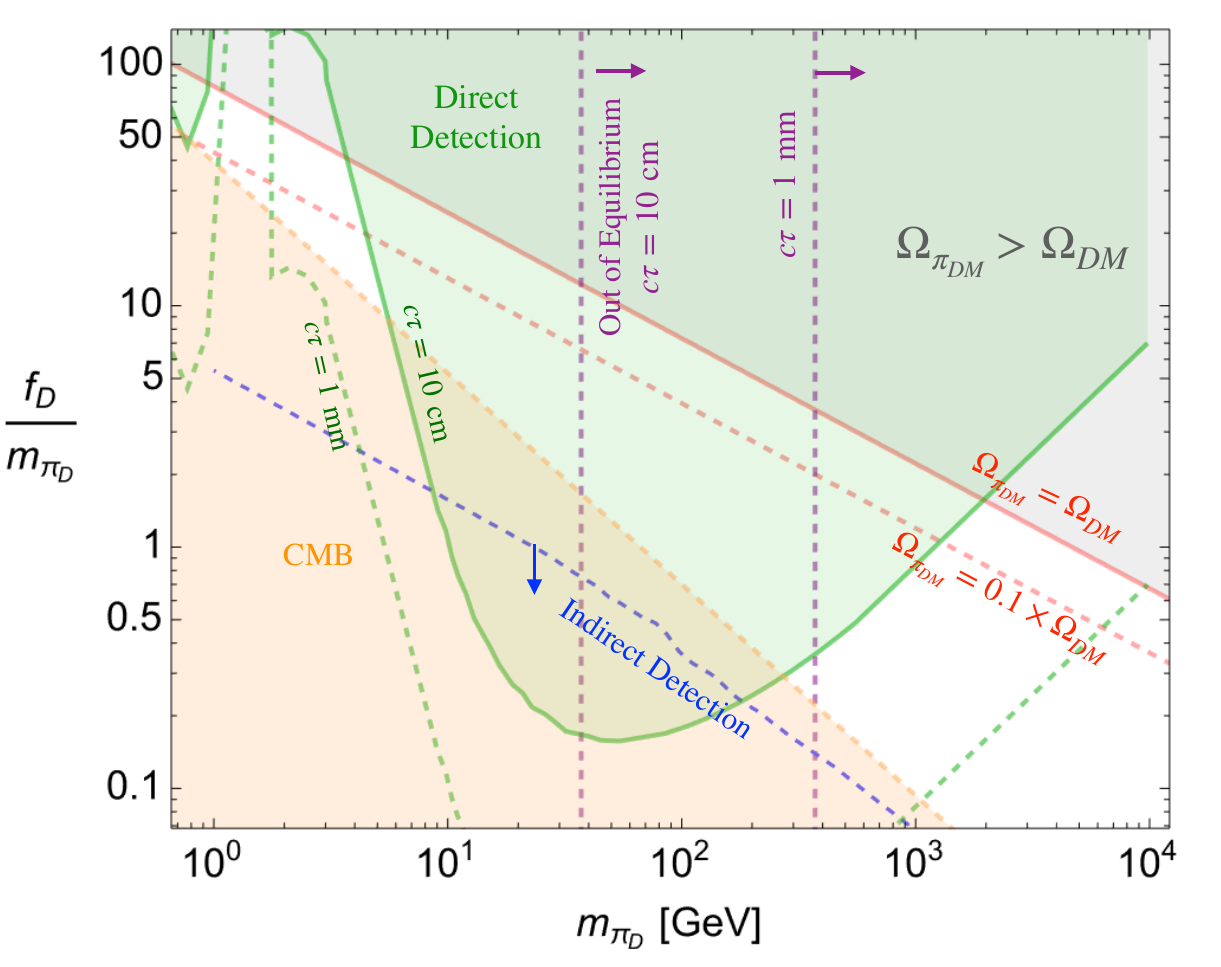}
	\caption{
 Current constraints on sneaky DM. The red line indicates the region of the parameter space in which the entire relic abundance is composed of stable dark pions, while the red dashed line corresponds to the parameter space where they make up 10\% of the total DM. The gray region is excluded because of the over closure of the universe, where only the $2 \to 2$ process is considered. The region below the dashed blue line is ruled out from indirect detection~\cite{Hutten:2022hud}, and the orange region below the dashed orange line is excluded by CMB bounds ~\cite{Planck:2018vyg}. In the parameter space to the right of the purple lines, dark sector is no longer in equilibrium with the SM bath. Therefore, the relic abundance calculation may need to be altered. The area above the green (dashed) line is excluded from direct detection assuming $c \tau =10\,$cm ($c \tau =1\,$mm) for the transient dark pions ~\cite{DirectDetection,SENSEI:2023zdf}. The left (right) plot corresponds to the case where dark sector couples to up (down) type quarks. The sudden fluctuations in direct detections corresponds to the available phase space of transient dark pion.     }
\label{fig:relicDD}
		\end{center}

	\end{figure}

Fig.~\ref{fig:relicDD} shows the current astrophysical and cosmological constraints on the sneaky DM model in the $m_{\pi_D}-f_{D}/m_{\pi_D}$ plane. The red line represents the region of the parameter space where the whole relic abundance is explained by stable dark pions. The gray region above is excluded from the over closure of the universe (where only the 2 → 2 process is considered), whereas the dashed red line corresponds to sneaky DM, which constitutes 10\% of the total relic abundance. As previously stated, we have explicitly confirmed that for the parameter space of interest, 3 → 2 processes are subleading to the 2 → 2 ones. The regions above the green (dashed) line is excluded from direct detection~\cite{DirectDetection,SENSEI:2023zdf} if one assumes for the transient dark pions  $c \tau =1\,$cm ($c \tau =0.1\,$mm). Finally, the region below the dashed blue line is excluded by indirect detection searches~\cite{Hutten:2022hud} whereas the orange region below the orange dashed line is excluded by CMB bounds~\cite{Planck:2018vyg}.

We can see from the figure that stable dark pions with masses $m_{\pi_D}\approx \mathcal{O}(1)\,$GeV and $c\tau=O(1)\,$cm can accommodate the entire relic abundance while satisfying all existing astrophysical and cosmological constraints. Heavier masses $m_{\pi_D}\gtrsim \mathcal{O}(1000)\,$GeV may be allowed from various DM detection experiments, however, one needs to do a more careful analysis of the relic abundance. In the low-mass window, the transient dark pions had to be moderately long-lived to account for the observed relic abundance. We focused on this particular mass window when studying the collider phenomenology of the model because it can provide a light enough mediator to leave observable imprints on the Large Hadron Collider (LHC).

\subsection{Flavour phenomenology}

The flavour phenomenology, including flavour violating meson decays and neutral meson mixing, of this type of models has been discussed in detail for example in~\cite{Renner:2018fhh,Carmona:2021seb,Jubb:2017rhm,Agrawal:2014aoa}. As we consider dark pion masses $\gtrsim \mathcal{O}(1)$~GeV, here we only show bounds from flavour violating $B \to K \pi_D$ decays. We distinguish between invisible final states, where all dark pions are stable on detector scales, and hadronic final states, where dark states decay back to SM. Following~\cite{Carmona:2021seb} the bound in the invisible case is obtained from CLEO data~\cite{CLEO:2001acz} requiring a dark pion lifetime $c\tau > 4$~m for both diagonal and off-diagonal dark pions~\cite{Dolan:2014ska}. For the hadronic final state  the limit $\mathrm{Br}(B\to s g) < 6.8\%$ from~\cite{CLEO:1997xir} is recast. 
 To ensure the dark pion decays inside the detector we require $c\tau < 45$~mm. These constraints are mostly relevant for couplings of dark pions to down-type quarks. An equivalent search for $D\to \pi\, invisible$ does not exist. Finally, another strong constraint arises from neutral meson mixing. For both coupling choices this sets bounds $m_X \gtrsim \mathcal{O}(10)~\text{TeV}$ assuming $\kappa_{\alpha,i}=1$ ($\alpha,i = 1,2,3$). These bounds can be evaded by choosing the respective off-diagonal couplings small~\cite{Renner:2018fhh}. Using non-degenerate couplings also leads to non-degenerate lifetimes of the transient off-diagonal dark pions. As this should not impact the collider phenomenology strongly, for illustration purposes we only show the case of degenerate couplings. Nonetheless, we do not include the neutral meson mixing bounds in Figs.~\ref{fig:mx_mpi1},~\ref{fig:mx_mpi1_down},~\ref{fig:mx_mpi15} and \ref{fig:mx_mpi15_down} as they can be evaded, but show the $B$ decay limits to give an estimate where the flavour bounds would be.

\section{Collider phenomenology}
\label{sec:collider}

The rich particle spectrum of the model can also be searched for at hadron colliders. Pairs of dark quarks can be produced either directly via a t-channel mediator exchange or together with SM quarks in mediator decays. Independent of the production process, dark quarks, analogously to SM quarks, undergo a parton shower and hadronization process in the dark sector, leading to the well known dark shower~\cite{Albouy:2022cin,Strassler:2006im,Han:2007ae,Butterworth:2023cgz} signatures. Different from the original emerging jets scenario~\cite{Schwaller:2015gea}, dark matter is now efficiently produced in the dark shower, such that a significant amount of missing energy is now present in the dark showers. Below we discuss the impact of this feature on different search strategies and propose a set of benchmark scenarios for which we obtain the current experimental constraints. In what follows we focus on low DM masses, $\mpid\leq 100$~GeV.

\subsection{Dark showers from sneaky dark matter}
\begin{figure}
    \centering
    \includegraphics[width=0.85\linewidth]{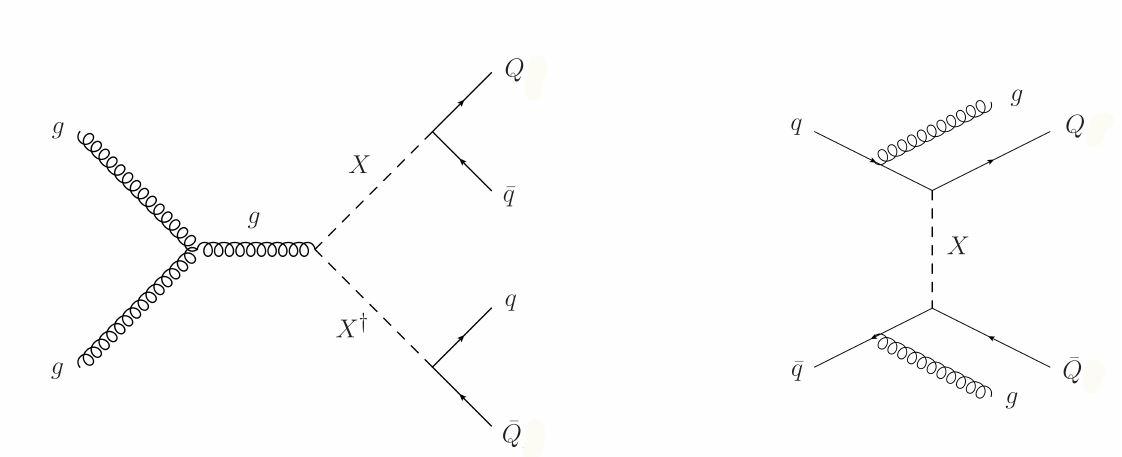}
    \caption{Exemplary Feynman diagrams for the production of two jets and two dark jets from mediator pair production (left) and t-channel mediator exchange (right).}
    \label{fig:feyn_production}
\end{figure}
The two main production processes of dark quarks are pair production of mediators and the production of two dark quarks together with up to two SM jets. Examples of Feynman diagrams that contribute to these processes are displayed in Fig.~\ref{fig:feyn_production}.

The respective cross sections are shown as a function of the mediator mass in Fig.~\ref{fig:production} for $\kappa = 1$ (solid lines) and $\kappa=0.1$ (dashed lines), for a dark quark mass of $m_{Q}=1$~GeV. For the production of dark quarks via t-channel mediator exchange processes with none, one or two additional SM jets were taken into account. All cross sections were calculated using \MADGRAPH~\cite{Alwall:2014hca,Frederix:2018nkq} using a custom UFO model file available at Ref.~\cite{model}.
Note that the production of two dark quarks with up to two SM jets includes both processes with pair produced mediators and processes where the dark quarks are produced via t-channel mediator exchange. 
Furthermore a $p_T$ cut of $p_T \geq 20$~GeV was imposed on the dark quarks to make the cross section well defined also in the $m_{Q}\to 0$ limit.

As expected, the cross section for mediator pair production (left) decreases strongly with $\mx$ from $\sim1$~pb at $\mx = 1$~TeV to $\sim10^{-11}$~pb for $\mx=5$~TeV, whereas the dark quark production cross section only decreases by roughly a factor $\sim500$ over the same mass range. On the other hand, the impact of decreasing $\kappa$ is less significant for mediator pair production, since there is always a contribution which purely depends on the QCD coupling. This leads to another noteworthy feature: For small enough $\kappa$ the mediator pair production cross section becomes independent of the choice of mediator hypercharge. For $\kappa\sim1$ couplings to up-type quarks (blue) lead to slightly larger cross sections than couplings to down-type quarks (orange) due to protons containing two up quarks leading to a larger contribution from t-channel dark quark exchange. As this contribution becomes more and more suppressed for smaller $\kappa$ the cross sections for couplings to up- and down-type quarks become the same. For the t-channel mediator exchange production of two dark quarks, however, the cross section is proportional to $\kappa^4$. Thus, once the mediator mass is large enough so that resonant mediator production becomes irrelevant, the two dark quark plus up to two SM jets production cross section decreases as $\kappa^4$. 

\begin{figure}
    \centering
    \includegraphics[width=0.47\linewidth]{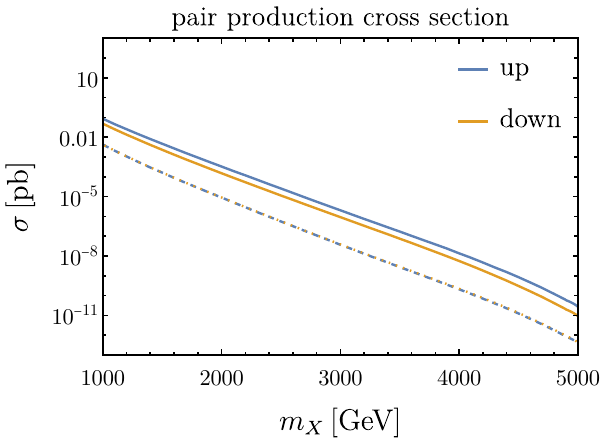}
    \quad
    \includegraphics[width=0.47\linewidth]{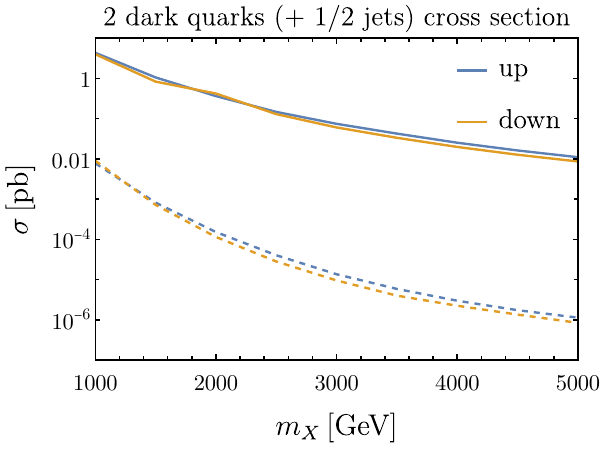}
    \caption{Cross sections for pair produced mediators (left) and the production of two dark quarks via t-channel mediator exchange with up to two additional SM jets (right) for couplings to up- (blue) or down-type quarks (orange) with $\kappa=1$ (solid) and $\kappa=0.1$ (dashed). A $p_T \geq 20$~GeV cut is included for dark quark production. 
    }
    \label{fig:production}
\end{figure}

A striking consequence of the production of dark quarks is that they undergo showering and hadronization in the dark sector, leading to spectacular signatures such as emerging jets~\cite{Schwaller:2015gea} or semi-visible jets~\cite{Cohen:2020afv}, which are now searched for by the ATLAS and CMS experiments~\cite{CMS:2018bvr,CMS:2021dzg,ATLAS:2022aaw,CMS:2024gxp}. Since the phenomenology of dark showers is discussed at length in the literature (see e.g.~\cite{Albouy:2022cin,Butterworth:2023cgz} for reviews, and \cite{Carrasco:2023loy} for recent related work), we focus here on the aspects which are characteristic of our scenario. 
First it should be noted that in the hard processes depicted in Fig.~\ref{fig:feyn_production}, only the first three generations of dark quarks can be produced. The fourth flavour
$Q_{4}$ can only originate from gluons splitting into same generation quark pairs in the parton shower. It follows that only an even number of
stable dark pions, which are combinations of $Q_4$ and $Q_i$ with $i \neq 4$, can occur in each event. We have verified that the implementation of the parton shower and hadronization in \PYTHIA takes this into account properly. 

The fraction of stable dark pions determines how much missing energy is on average contained in the dark jets. The ratio of diagonal to off-diagonal degrees of freedom for four dark flavours is $(4-1)/(4^2-1-(4-1))=1/4$. However the Lund string fragmentation model implemented for the HiddenValley module of \PYTHIA  gives a ratio of $1/(4-1)=1/3$. 
Combined with the fact that there are three stable and three transient (complex) off-diagonal dark pions the average fraction of stable dark pions in each event should be $2/3\times 1/2 = 1/3$ as long as the transient dark pions decay promptly. If the transient dark pions have lifetimes of order the detector scale this value increases. It follows that there will now always be a significant amount of missing energy (MET) in the signal events, even in the limit where the transient dark pions decay promptly. 

Let us now consider the production of two dark and two SM jets through pair produced mediators. Then, depending on the transient dark pion lifetime several possible signatures can be distinguished: 
If the transient dark pions decay promptly the signal consists of four jets, while for long-lived transient dark pions two jets and MET can be seen. In the intermediate lifetime regime, when dark pions decay back to SM particles inside the detector the spectacular signature of emerging jets~\cite{Schwaller:2015gea} arises. Due to different dark pion momenta the dark pions in a jet decay to SM particles at various points so that a jet "emerges". All three signatures have been searched for at ATLAS and/or CMS~\cite{ATLAS:2020syg,CMS:2018bvr,CMS:2019zmd,CMS:2022usq}.  Note that these signatures also arise when a pair of dark quarks is produced via t-channel mediator exchange together with 2 initial state radiation (ISR) jets. We discuss the implications of this production channel in more detail below. In addition, ATLAS performed a dedicated search for pairs of semi-visible jets, jets that consist of visible and invisible particles, produced from t-channel mediator exchange with up to two SM jets~\cite{ATLAS:2022aaw}. In the following, all four signatures are used to constrain the parameter space and to see if the dark matter region in Fig.~\ref{fig:relicDD} can be probed at colliders.

\subsection{Phenomenological Parameters}
As usual for composite models, a sizeable number of parameters are needed to fully specify the model. Fortunately in practice, the following combination of fundamental and effective parameters are sufficient to determine the collider phenomenology:   
\begin{align}
    \mx \;, \quad \kappa \;,  \quad m_{\pi_D} \;, \quad f_{D} \;,\quad  \Lambda_D \;,\quad m_{Q} .
\end{align}
Here $\mx$ and $\kappa$ are the mediator mass and Yukawa coupling matrix appearing in the Lagrangian, $\Lambda_D$ is the dark confinement scale, and $m_{\pi_D}$ and $f_{D}$ are the mass and decay constant of the dark pions that govern their effective interactions. Finally $m_{Q}$ is not the actual dark quark mass but should rather be understood as a constituent mass used as an input for \PYTHIA's HiddenValley module. 

Instead of $\kappa$, it is convenient to use the dark pion lifetimes $c\tau_{\pi_{ij}}$ as free input parameters. If we further set $\kappa_{\alpha i}=\kappa$ for $\alpha, i = 1,2,3$ ($\kappa_{4 i}=0$), all diagonal dark pions have the same lifetime, and the same is true for transient off-diagonal dark pions. Furthermore the diagonal and off-diagonal dark pion lifetimes are connected via
\begin{align}
    c\tpid^{\rm off-diag}=c\tpid^{\rm diag}\times \frac{3}{4},
\end{align}
leaving us with only one lifetime parameter. 

Moreover, in \PYTHIA we choose 
$\frac{1}{2}m_{\rho_D}=2m_{\pi_D} = m_{Q}$ for the phenomenological parameters.  
The remaining free quantities are $m_X$, $\mpid$, $f_D$ and $\Lambda_D$. Of these, the last three are in general not independent, but determined by the strong dynamics. From naive dimensional analysis we expect  $f_D \approx \Lambda_D/4\pi$. While for $f_{D}\approx m_{\pi_D}$ this is never grossly violated with $\Lambda_D = 2m_{\pi_D}$ in the mass range of interest, from a dark matter perspective we are more interested in the case $f_{D}=\mathcal{O}(10)m_{\pi_D}$. To ensure that the relation above is not grossly violated, $\Lambda_D=40m_{\pi_D}$ is used for $f_D \sim\mathcal{O}(10)\mpid$. Since the efficiency of collider searches does not depend strongly on the mediator mass~\cite{Mies:2020mzw}, we use $m_X=2$~TeV for event generation in all recasts, which we perform for dark pion masses $1\leq \, \mpid \leq 50~\text{GeV}$ and lifetimes $10^{-4} \leq\, c\tpid^{\rm diag} \leq 10^4~\text{mm}$. 

In the semi-visible jets search~\cite{ATLAS:2022aaw}, the fraction of invisible dark pions, $R_{\rm inv}$ appears as another model parameter. In our scenario there is a lower bound on this quantity due to DM pions being produced in the shower, however it also depends on the lifetime of the transient dark pions and can thus take values ranging from $\sim0.3$ for prompt decays to $\sim0.95$.

\subsection{Collider limits and future projections}
Searches with the full run 2 data set of $139$~fb$^{-1}$ have been performed for non-resonant  dijet pairs~\cite{CMS:2022usq} at CMS and for two jets plus missing energy at both CMS~\cite{CMS:2019zmd} and ATLAS~\cite{ATLAS:2020syg}. Dedicated searches for semi-visible jets~\cite{ATLAS:2022aaw} at ATLAS and for emerging jets~\cite{CMS:2018bvr,CMS:2024gxp} at CMS using the full run 2 data set have also appeared recently. In the following, we describe how these searches are used to place constraints on the model parameter space.

The CMS search for jets plus missing energy~\cite{CMS:2019zmd} was performed as a search for supersymmetry with the lightest neutralino as a dark matter candidate and considering several neutralino masses. To set bounds on $\mx$ we use the cross section limit for the lightest given neutralino mass of $\sim 6$~GeV. The lowest non-zero neutralino mass used in the corresponding ATLAS search is with $100$~GeV well above the mass region we are interested in, so that we only use the CMS search in the following. 
For completeness, we show the cross section limits from the CMS two jet plus MET search~\cite{CMS:2019zmd}, from the CMS non-resonant pairs of dijets search~\cite{CMS:2022usq} and from the ATLAS semi-visible jets search~\cite{ATLAS:2022aaw} in Fig.~\ref{fig:xseclimit} in appendix~\ref{app:xsec}, where we also discuss more details of the event selection. For the CMS emerging jet search cross section limits are given directly in the $c\tpid-\mx$ - plane for mediator masses $400 \le \mx \le 2000$~GeV and lifetimes $1~{\rm mm} \le c\tpid \le 1000$~mm and are not shown here.

To find the upper limit on the mediator mass from the four jets search, the two jet plus missing energy and semi-visible jets search we use the \PYTHIA~\cite{Bierlich:2022pfr} HiddenValley module on the events generated by \MADGRAPH  for $m_X = 2$~TeV to carry out the dark sector and SM showering and hadronization. Hereby, dark pion mass (and the related parameters) and lifetimes were varied in the ranges
\begin{align}
    1\leq \, &\mpid \leq 50~\text{GeV}\,,\\
    10^{-4} \leq\, &c\tpid^{\rm diag} \leq 10^4~\text{mm}\,.
\end{align}
For each dark pion mass and lifetime combination the efficiency of all three searches is found by applying the same selection criteria as in the experimental searches. The selection criteria are listed in Tab.~\ref{table:selection} and are discussed in more detail in appendix~\ref{app:selection}.  

The resulting efficiencies as a function of the lifetime  are shown in Fig.~\ref{fig:efficiency} for $\mpid=1$~GeV for four jets, two jets plus MET and semi-visible search for couplings to up-type quarks, while it is shown for $\mpid = 1$ and 10~GeV for the semi-visible search for couplings to down-type quarks. The efficiencies for couplings to up-type quarks are shown as dashed lines, for couplings to down-type quarks as solid lines. There are several interesting, but not unexpected features: While the jets plus missing energy is most efficient for long-lived dark pions, it remains efficient even for smaller lifetimes due to the production of dark matter pions in the dark shower. The four jet search efficiency instead decreases for larger lifetimes as fewer dark pions contribute to $H_T$. In addition, it is suppressed by the requirement to reconstruct two heavy states. Finally the semi-visible jets search increases with lifetime as the missing energy requirement gets fulfilled more frequently, but is suppressed due to the low momenta of jets from t-channel mediator exchange. It also applies a b-jet veto, which suppresses it further for $\mpid > m_b$ for couplings to down type quarks. 
\begin{figure}
    \centering
    \includegraphics[width=0.6\linewidth]{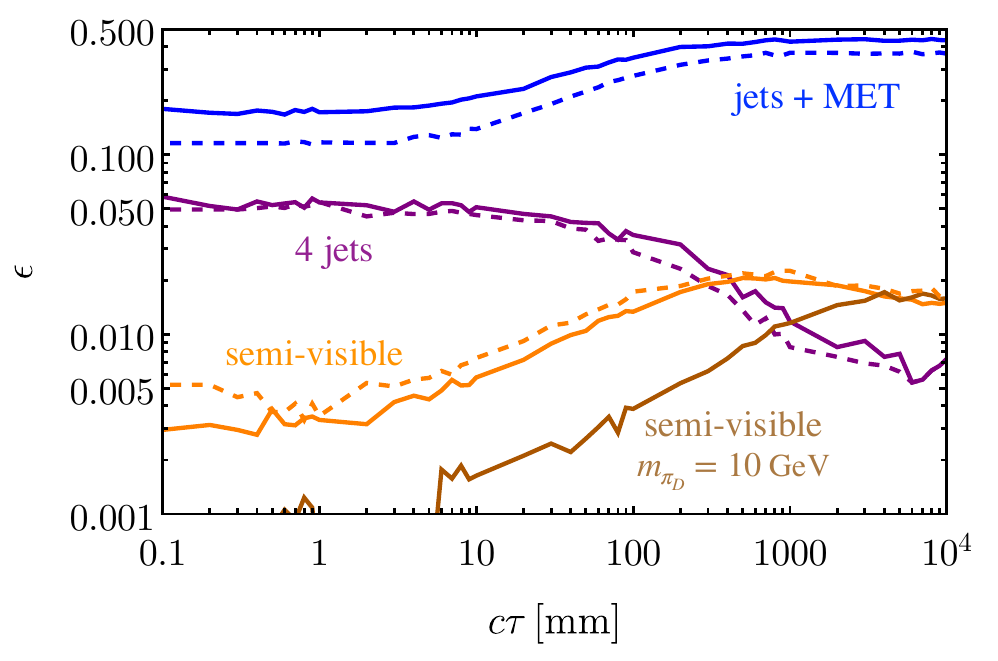}
    \caption{Efficiencies of the two jets plus MET (blue), four jet (purple) and semi-visible jets (orange) searches for $m_{\pi_D} = 1$~GeV, $m_X = 2$~TeV, $\mpid=f_D$ and $\kappa = 1$ as a function of the diagonal dark pion lifetimes. The lifetime of the off-diagonal dark pions has been adjusted accordingly. Couplings to up-type quarks are shown by the dashed lines, couplings to down-type quarks by solid lines. For couplings to down-type quarks for the semi-visible search the efficiencies for $m_{\pi_D}=10\,\mathrm{GeV}$ are shown in brown as well. 
    }
    \label{fig:efficiency}
\end{figure}

Before using the acceptances to set limits on the mediator mass it is important to point out that dark hadronization, like SM hadronization, is a non-perturbative process that cannot be calculated, but has to modeled. In \PYTHIA, Lund string fragmentation is used to model SM and dark sector hadronization. Unlike for SM hadronization no data exist for dark sector hadronization that could be used to set the Lund string fragmentation model parameters. For this study we use the default values implemented in \PYTHIA for the dark sector hadronization. However, as there is no theory prior for these values, these values could be very different. While this can have significant effects on observables like shower shapes and jet substructures~\cite{Cohen:2020afv,Cohen:2023mya}, searches which do not make use of these observables, such as the ones considered here, should not be strongly affected.

Based on the production cross sections and search efficiencies we can calculate fiducial cross sections
\begin{align}
    \sigma(\mpid,c\tpid) = \epsilon_{\rm eff}(\mpid,c\tpid)\times \sigma_{\rm prod}\,
\end{align}
for the two jets plus MET, four jets and semi-visible searches and compare them to the observed cross section limits to find upper limits on the mediator mass. For the semi-visible search the cross section limit corresponding to the closest $R_{inv}$ value is used. 
For the emerging jet search, on the other hand, we use the efficiencies and number of background events provided in~\cite{CMS:2024gxp} and use the procedure described in~\cite{Mies:2020mzw} to set upper limits on the mediator mass. Unlike in~\cite{Mies:2020mzw}, where the diagonal and off-diagonal dark pion lifetimes are varied independently, here, the diagonal dark pions lifetimes are set to the values suggested in~\cite{CMS:2024gxp}, while 
the off-diagonal dark pion lifetimes are calculated following \eqref{eq:partdec}. The efficiencies and number of background events are taken according to the diagonal dark pion lifetime. This is a less conservative choice than used in~\cite{Mies:2020mzw}, meaning the emerging jet limits might be slightly over-estimated in some regions. 

The limits on the mediator mass are shown as a function of the lifetime in Fig.~\ref{fig:mX_limit} for $\mpid=1$~GeV for all four searches (and for $\mpid = 10$~GeV for the semi-visible search with couplings to down-type quarks), $\mpid= f_D$ and $\kappa=1$. It can be seen that for small lifetimes the four jets search (purple) puts the strongest constraint $m_X \lesssim1500$~GeV for couplings to up-type quarks, while for couplings to down-type quarks it is the jets plus MET search (blue) and the four jets search, both set a limit close to $m_X = 1200$~GeV. While the efficiencies for the jets plus MET search is higher, the bounds on the four jets cross sections are more stringent leading to a more stringent constraint for couplings to up-type quarks, where the production cross section is larger, and comparable limits for couplings to down-type quarks. In the intermediate region the emerging jets search sets the strongest limit, up to $\mX \sim 2250$~GeV (2000~GeV) for up-type\footnote{In the up-type case, decays $X\to t Q_D$ are included, which could suffer from a lower acceptance, presumably leading to a slight over-estimation of the limit.} (down-type) couplings. A possible further improvement could be the inclusion of t-channel mediator exchange production of dark quarks, in which case masses up to $\mX=2500$~GeV could be probed.

\begin{figure}
    \centering
    \includegraphics[width=0.48\linewidth]{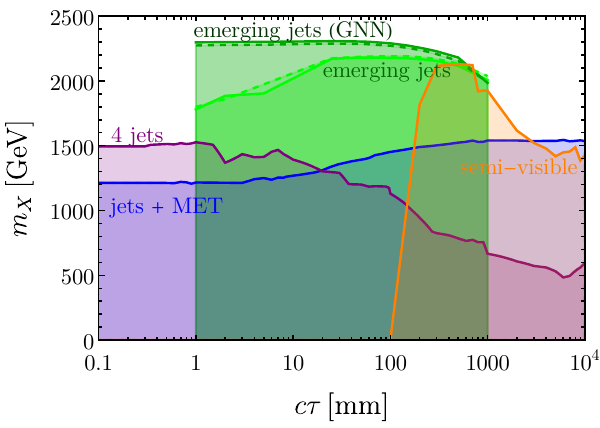}
     \quad
     \includegraphics[width=0.48\linewidth]{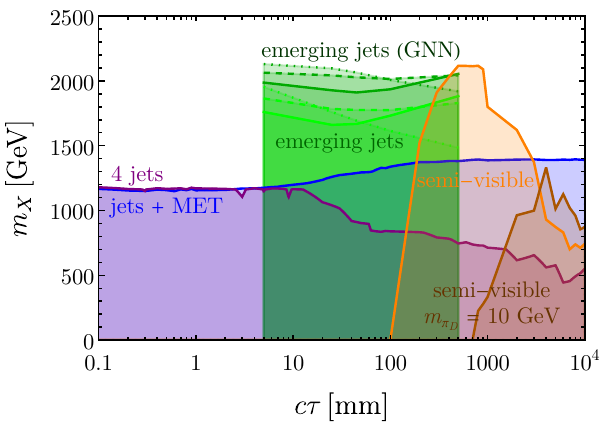}
    \caption{Upper limit on $m_X$ from the emerging jet search (solid, dashed and dotted green and dark green show the limits from the model agnostic and GNN search with the full run 2 data set for $\mpid = 10,\, 20,\, 6$~GeV), MET search (blue), four jets search (purple) and semi-visible search (orange, brown for $\mpid=10$~GeV) for couplings $\kappa=1$ and $\mpid=1$~GeV to up-type (left) and down-type quarks (right). }
    \label{fig:mX_limit}
\end{figure}
The semi-visible search sets the strongest limits up to $m_X\sim 2100$~GeV for lifetimes around 1~m where the emerging jet searches sensitivity decreases. Finally, for large lifetimes the MET plus jets search sets the strongest limit at $\mX \sim 1500$ (1400)~GeV for couplings to up- (down-)type quarks. This is to be expected, as this search is designed for this exact scenario.

The same is show in Fig.~\ref{fig:mxlimit01} in appendix~\ref{app:xsec} for $\kappa=0.1$. As anticipated above, the strong $\kappa$ suppression of the t-channel mediator exchange means the fiducial cross section for the semi-visible search is always below the experimental limit, while for the other three searches the limits weaken to $\sim 500-600$~GeV in the small lifetime region, $\sim 1500$~GeV in the intermediate lifetime region and $\sim 600-700$~GeV in the large lifetime region. These results actually highlight the power of dedicated long lived particle searches like the emerging jets search. Since they suffer from lower backgrounds, they remain sensitive to significantly reduced cross sections, and this could further improve in future high luminosity runs.

\subsection{Combined limits}

Finally it is important to understand to which extent the collider limits are able to probe the parameter space where the dark matter pions are viable DM candidates. This is difficult in practice, since the various probes are usually sensitive to different combinations of parameters, and for the sake of presentation, some of them must be set to fixed values. 

For DM searches at colliders, constraints are most commonly prepared in the parameter plane spanned by the masses of the DM and the mediator, and we follow this approach here. This leaves $\kappa$ and $f_D$ undetermined. In Figs.~\ref{fig:mx_mpi1}, \ref{fig:mx_mpi1_down} we show results for $\kappa=1$ and $\kappa = 0.1$ with $f_D = m_{\pi_D}$. Fixing these parameters essentially relates the lifetime to the DM mass, such that the collider limits appear rotated by 90 degrees (and flipped upside down). Direct detection constraints (shown in grey) overtake the collider constraints for DM masses above $5-10$~GeV.  For smaller $\kappa$ both collider and direct detection limits weaken, and the region probed by the emerging jets search moves to larger DM masses, in order to keep the lifetime unchanged. For down-type quark couplings, the parameter space is further constrained by searches for flavour violating $B$-meson decays. In particular the $B\to K \; invisible$ search probes very large mediator masses, into the tens of TeV range. Potentially a dedicated search for $B\to K + long\;lived\;particle$ could fully probe the parameter space in between the collider and flavour bounds.

\begin{figure}
    \centering
    \includegraphics[width=0.47\linewidth]{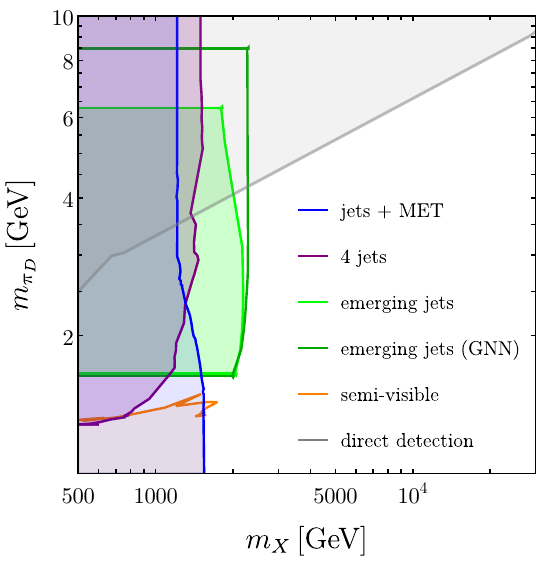}\quad
    \includegraphics[width=0.47\linewidth]{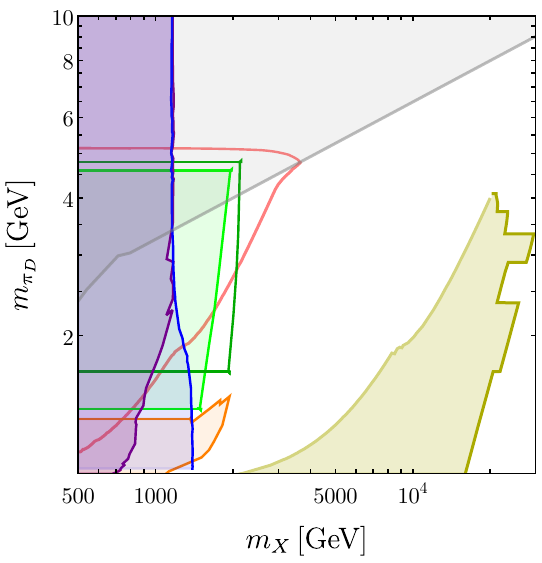}
    \caption{Constraints in the $\mpid-\mX$ frame for $\mpid=f_D$ and coupling $\kappa = 1$  left (right) to up-type (down-type) quarks. Purple, blue, (dark) green and orange regions are excluded from four jets, jets plus missing energy, (GNN) model agnostic emerging jets and semi-visible jets search. The grey region is excluded from direct detection experiments. In addition estimated bounds from prompt flavour-violating $B$ decays are shown in red and the bounds from $B\to K\,invisible$ are shown in olive for couplings to down-type quarks. }
    \label{fig:mx_mpi1}
\end{figure}

\begin{figure}
    \centering
    \includegraphics[width=0.47\linewidth]{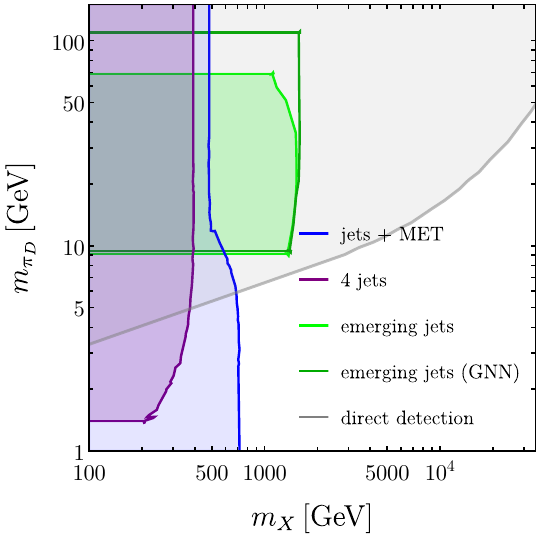}\quad
    \includegraphics[width=0.47\linewidth]{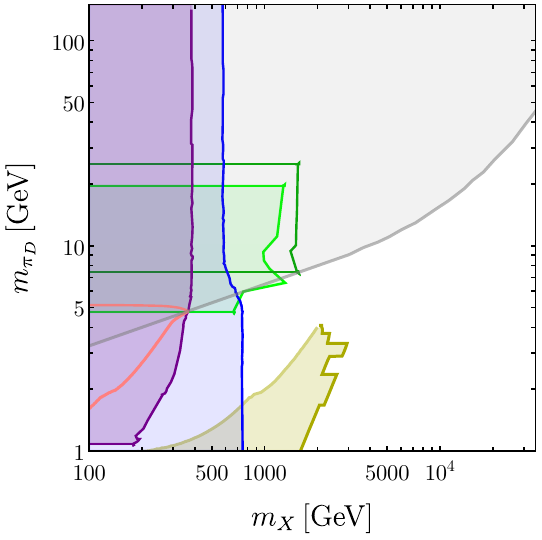}
    \caption{Same as figure \ref{fig:mx_mpi1}, but for $\kappa=0.1$.}
    \label{fig:mx_mpi1_down}
\end{figure}

In addition in appendix~\ref{app:xsec} results for $f_D = 15 m_{\pi_D}$ are shown. It is clear that in this case a smaller dark pion mass region is excluded by collider searches than for $f_D=\mpid$, due to the additional factor $15^2$ that now appears in the expressions for the lifetimes. Since this parameter choice is preferred from a relic abundance perspective, covering the region down to $m_{\pi_D}\sim 1$~GeV is important, while direct detection takes over already between $2-4$~GeV. 

Overall, we see that direct detection, colliders and flavour experiments probe complementary parameter regions of sneaky DM. Let us finish the section by speculating on the prospects of future searches. Clearly a future hadron collider operating at higher collision energies can push the collider limits to higher mediators masses, and possibly probe the $5-10$~TeV region. A multi-TeV lepton collider can pair produce the mediators since they are electrically charged, and should be able to push the limits close to half the collider energy. In addition, radiation of a single mediator from a quark, converting it into a dark quark, or the loop induced pair production of dark quarks could lead to interesting signatures that could be explored in the future. Dark matter direct detection experiments have reached the neutrino fog~\cite{XENON:2024ijk} at recoil energies corresponding to $\sim 10$~GeV DM masses, and are getting close for higher masses. In the future one can therefore expect more incremental improvements in the sensitivity. For the best motivated region of DM masses in the $1-10$~GeV region, there is however still a lot of room for improvement. Searches for rare meson decays could probe DM masses below the B meson mass, but only in the case of down-type couplings, again highlighting the importance of a comprehensive search program.

\section{Conclusions}
The origin of DM constitutes one of the most important problems in particle physics. However, in practice, we know very little about its nature, and very distinct paradigms are still compatible with existing experimental searches.  The long-time leading DM candidate, a weakly interacting massive particle, is not at its best due to the absence of new physics at colliders and in DM direct detection experiments. A possible way out of this impasse is that (light) new physics associated with DM is long-lived, since most of the searches so far assumed prompt decays. This is even more important for light new physics, since the lower the energy the more the data there is available. In this work, we have presented a model, or class of models, featuring viable DM candidates in the GeV range coming along with long-lived particles and a very rich collider phenomenology. 

More specifically, we have presented a fairly minimal composite model of DM based on a QCD-like dark sector featuring a $SU(N_d)$ gauge symmetry, $n_f$ dark quarks, and a heavy t-channel scalar mediator.  We have demonstrated that for $n_f\ge 4$, the dark flavor symmetry $\mathcal{G}\equiv SU(n_f-3)\times U(1)$ guarantees the stability of a subset of the dark pions, which are our DM candidates. Their interaction with the rest of the dark pions (that we dubbed transient dark pions) and their subsequent decay to SM quarks controls the DM relic abundance as well as the cross-section in direct and indirect detection experiments, which become  independent of the mass and couplings of the heavy mediator. Moreover, along the lines of models of impeded DM, due to the degeneracy of the dark pions and their pNGB nature, the co-annihilation cross-section depends linearly on the velocity and is suppressed at low temperatures, alleviating both direct and indirect experimental bounds and opening up the GeV-mass window for this \emph{sneaky} DM, typically excluded for 
thermal production mechanisms.

For the specific case of $n_f=4$ (where the dark sector is comprised of 9 transient and 6 stable dark pions), we have demonstrated that the sneaky DM can be as light as a few $\mathrm{GeV}$, without contradicting current bounds from direct and indirect DM detection. For such a window, the transient dark pions are long-lived with $c\tau\gtrsim 1\,\mathrm{mm}$, leading to spectacular signals at colliders such as emerging or semi-visible jets. In particular, we have derived constraints for the mediator mass in the case of dark matter masses $\mathcal{O}(1)\,\mathrm{GeV}$ using several existing searches by ATLAS and CMS and estimated the reach of future experiments. For fixed values of the dark meson masses and the transient dark pion decay constant $f_D$, we have set bounds on the mediator masses as a function of the transient dark pions lifetime combining searches for four jets, two jets + MET, semi-visible jets as well as searches for emerging jets or bounds from flavour violating meson decays, excluding $\mathcal{O}(1)\,\mathrm{TeV}$ mediators for order one  values of the portal coupling. 

Compared with similar models that give rise to dark showers at colliders, sneaky dark matter predicts a dark shower composed of both stable and long lived dark pions, which suggests a search that combines both emerging jets and semi-visible jets strategies. The model is minimal in the sense that no additional particles or mechanisms are required to explain dark matter, and thus represents a key benchmark model for collider searches for composite dark matter. Finally the DM is produced copiously in the shower - in the case of a detection, this will therefore open the possibility for probing dark matter properties with precision studies of dark showers. Possible future directions include studying the prospect of detecting sneaky DM at future colliders and in fixed target/beam dump experiments, as well as more careful studies of the cosmic ray spectra produced by our model.

 \appendix
\section*{Acknowledgments} We thank J. Smirnov for pointing out the importance of the annihilation of stable dark baryons,  S. Sinha for encouraging discussions regarding searches for the model at the colliders, K. Pedro for useful comments on the draft, and all participants of the MITP program ``The Dark Matter Landscape" for providing a stimulating atmosphere.  AC thanks Renato Fonseca for illuminating  discussions on several group theory aspects of this work. The work of AC has been partially supported by MCIN/AEI (10.13039/501100011033) and ERDF (grant PID2022-139466NB-C21), as well as by the Junta de Andalucía grants FQM 101 and P21\_00199. CS is supported by the Office of High Energy Physics of the U.S. Department of Energy under contract DE-AC02-05CH11231 and through the Alexander von Humboldt Foundation. FE and PS are supported by the Cluster of Excellence Precision Physics, Fundamental Interactions and Structure
of Matter (PRISMA+ EXC 2118/1) funded by the German Research Foundation (DFG) within
the German Excellence Strategy (Project No. 390831469).
We would like to acknowledge the Mainz Institute for Theoretical Physics (MITP) and the CERN Theoretical Physics Department for hospitality and partial support during the completion of this work.

 \section{$SU(n_f)$ group theory}
 \label{app:sun}
 We take $SU(n_f)$ generators with the following normalization 
\begin{align}
    \mathrm{Tr}\big(T^a T^b\big)=\frac{1}{2}\delta^{ab}.
\end{align}
We can define the tensors $f^{abc}$ and $d^{abc}$
\begin{align}
    \big[T^a,T^b\big]=if^{abc}T^c,\quad d_{abc}=2\mathrm{Tr}\Big(\big\{T^a,T^b\big\}T^c\Big)
\end{align}
which are totally anti-symmetric and totally symmetric, respectively. Then~\cite{Haber:2019sgz}
\begin{align}
    T^aT^b=\frac{1}{2}\Big[\frac{1}{n_f}\delta^{ab}\mathbf{1}+(d^{abc}+if^{abc})T^c\Big]
    \label{eq:twots}
\end{align}
which leads to
\begin{align}
    \mathrm{Tr}\Big[T^aT^b T^c T^d\Big]=\frac{1}{4n_f}\delta^{ab}\delta^{cd}+\frac{1}{8}\big(d^{abe}d^{cde}-f^{abe}f^{cde}+if^{abe}d^{cde}+if^{cde}d^{abe}\big).
\end{align}
This relationship can also be written as
\begin{align}
    \mathrm{Tr}\Big[T^aT^b T^c T^d\Big]&=\frac{1}{4n_f}\big(\delta^{ab}\delta^{cd}-\delta^{ac}\delta^{bd}+\delta^{ad}\delta^{bc}\big)+\frac{1}{8}\big(d^{abe}d^{cde}-d^{ace}d^{bde}+d^{ade}d^{bce}\big)\nonumber\\
    &+\frac{i}{8}\big(d^{abe}f^{cde}+d^{ace}f^{bde}+d^{ade}f^{bce}\big).
\end{align}
Finally,
\begin{align}
\mathrm{Tr}\Big[T^aT^bT^cT^dT^e\Big]&=\frac{1}{8n_f}\big(d^{abe}+if^{abe}\big)\delta^{cd}+\frac{1}{8n_f}\delta^{ab}\big(d^{cde}+if^{cde}\big)\nonumber\\
&+\frac{1}{16}(d^{ab\alpha}+if^{ab\alpha})(d^{cd\beta}+if^{cd\beta})(d^{\alpha \beta e}+if^{\alpha \beta e}).
\end{align}

The pNGBs associated to the spontaneous symmetry breaking of $SU(n_f)_L\times SU(n_f)_R\to SU(n_f)_V$ will transform in the adjoint of $SU(n_f)_V$, which decomposes under its maximal subgroup $SU(3)\times SU(n_f-3)\times U(1)\subset SU(n_f)$ as
\begin{align}
\mathbf{Adj}=(\mathbf{8},1)_0\oplus(\mathbf{1},\mathbf{(n_f-3)^2-1)}_0\oplus (\mathbf{1},\mathbf{1})_0\oplus\left[(\mathbf{3},\overline{\mathbf{n_f-3}})_{\frac{n_f a}{n_f-3}}+\mathrm{h.c.}\right], \quad |a|=\sqrt{\frac{n_f-3}{6n_f}}.
\label{eq:adj}
\end{align}

Through this work, we will use the following generators of $\mathfrak{su}(4)$
\begin{align}
T_1&=\left(
\begin{array}{cccc}
 0 & \frac{1}{2} & 0 & 0 \\
 \frac{1}{2} & 0 & 0 & 0 \\
 0 & 0 & 0 & 0 \\
 0 & 0 & 0 & 0 \\
\end{array}
\right),\quad T_2=\left(
\begin{array}{cccc}
 0 & -\frac{i}{2} & 0 & 0 \\
 \frac{i}{2} & 0 & 0 & 0 \\
 0 & 0 & 0 & 0 \\
 0 & 0 & 0 & 0 \\
\end{array}
\right),\quad T_3=\left(
\begin{array}{cccc}
 \frac{1}{2} & 0 & 0 & 0 \\
 0 & -\frac{1}{2} & 0 & 0 \\
 0 & 0 & 0 & 0 \\
 0 & 0 & 0 & 0 \\
\end{array}
\right),\quad T_4=\left(
\begin{array}{cccc}
 0 & 0 & \frac{1}{2} & 0 \\
 0 & 0 & 0 & 0 \\
 \frac{1}{2} & 0 & 0 & 0 \\
 0 & 0 & 0 & 0 \\
\end{array}
\right),\nonumber\\
T_5=&\left(
\begin{array}{cccc}
 0 & 0 & -\frac{i}{2} & 0 \\
 0 & 0 & 0 & 0 \\
 \frac{i}{2} & 0 & 0 & 0 \\
 0 & 0 & 0 & 0 \\
\end{array}
\right),\quad T_6=\left(
\begin{array}{cccc}
 0 & 0 & 0 & 0 \\
 0 & 0 & \frac{1}{2} & 0 \\
 0 & \frac{1}{2} & 0 & 0 \\
 0 & 0 & 0 & 0 \\
\end{array}
\right),\quad
T_7=\left(
\begin{array}{cccc}
 0 & 0 & 0 & 0 \\
 0 & 0 & -\frac{i}{2} & 0 \\
 0 & \frac{i}{2} & 0 & 0 \\
 0 & 0 & 0 & 0 \\
\end{array}
\right),\, T_8= \frac{1}{2 \sqrt{3}} \left(
\begin{array}{cccc}
1 & 0 & 0 & 0
   \\
 0 & 1 & 0 & 0
   \\
 0 & 0 & -2 & 0 \\
 0 & 0 & 0 & 0 \\
\end{array}
\right),\nonumber\\
T_9&=\left(
\begin{array}{cccc}
 0 & 0 & 0 & \frac{1}{2} \\
 0 & 0 & 0 & 0 \\
 0 & 0 & 0 & 0 \\
 \frac{1}{2} & 0 & 0 & 0 \\
\end{array}
\right),\quad T_{10}=\left(
\begin{array}{cccc}
 0 & 0 & 0 & -\frac{i}{2} \\
 0 & 0 & 0 & 0 \\
 0 & 0 & 0 & 0 \\
 \frac{i}{2} & 0 & 0 & 0 \\
\end{array}
\right),\quad T_{11}=\left(
\begin{array}{cccc}
 0 & 0 & 0 & 0 \\
 0 & 0 & 0 & \frac{1}{2} \\
 0 & 0 & 0 & 0 \\
 0 & \frac{1}{2} & 0 & 0 \\
\end{array}
\right),\quad T_{12}=\left(
\begin{array}{cccc}
 0 & 0 & 0 & 0 \\
 0 & 0 & 0 & -\frac{i}{2} \\
 0 & 0 & 0 & 0 \\
 0 & \frac{i}{2} & 0 & 0 \\
\end{array}
\right),\nonumber\\
T_{13}&=\left(
\begin{array}{cccc}
 0 & 0 & 0 & 0 \\
 0 & 0 & 0 & 0 \\
 0 & 0 & 0 & \frac{1}{2} \\
 0 & 0 & \frac{1}{2} & 0 \\
\end{array}
\right),\quad  T_{14}=\left(
\begin{array}{cccc}
 0 & 0 & 0 & 0 \\
 0 & 0 & 0 & 0 \\
 0 & 0 & 0 & -\frac{i}{2} \\
 0 & 0 & \frac{i}{2} & 0 \\
\end{array}
\right),\quad T_{15}=\frac{1}{2 \sqrt{6}} \left(
\begin{array}{cccc}
1 & 0 & 0 & 0
   \\
 0 & 1 & 0 & 0
   \\
 0 & 0 & 1 & 0
   \\
 0 & 0 & 0 &
   -3 \\
\end{array}
\right).
\end{align}
One can readily see that the first eight generators fulfill the $\mathfrak{su}(3)$ algebra whereas the last one generates a $\mathfrak{u}(1)$. On the other hand, the adjoint of $SU(4)$ decomposes under its maximal subgroup $SU(3)\times U(1)\subset SU(4)$ as
\begin{align}
\mathbf{15}=\mathbf{8}_0\oplus \mathbf{1}_0\oplus\left[\mathbf{3}_{\sqrt{\frac{2}{3}}}+\mathrm{h.c.}\right].
\end{align}

\section{Accidental $\mathbb{Z}_2$ symmetry of the ChPT Lagrangian for $n_f=4$}
\label{ap:sym}
Here we will prove that in the case of $n_f=4$, i.e., $SU(4)_L\times SU(4)_R\to SU(4)_V$, there is an accidental symmetry making $\pi_{\rm DM}$ to appear always in pairs in any term of the Chiral Lagrangian.
\begin{proposition}
\label{teor:su4}
Let us consider a global $SU(4)_L\times SU(4)_R$  symmetry group  spontaneously broken to its diagonal $SU(4)_V$. Such breaking features $n_f^2-1=15$ Nambu-Goldstone bosons that under its $SU(3)\subset SU(4)_V$ subgroup decomposes as $\mathbf{15}=\mathbf{8}\oplus\mathbf{3}\oplus\bar{\mathbf{3}}\oplus \mathbf{1}$. Then, any interaction arising from the Chiral Lagrangian (including the Wess-Zumino-Witten 5-point interactions) can only involve an even number of Nambu-Goldstone bosons transforming in the $\mathbf{3}$ (or its conjugated $\bar{\mathbf{3}}$).
\end{proposition}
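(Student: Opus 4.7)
The plan is to derive the statement as an immediate consequence of a discrete $\mathbb{Z}_2$ (equivalently, of a continuous $U(1)$) charge conservation that descends from the $SU(4)_V$ invariance of the full Chiral Lagrangian. Concretely, I would exhibit an element $V \in SU(4)$ whose adjoint action flips the sign of the $\mathbf{3}$ and $\bar{\mathbf{3}}$ pions while fixing the $\mathbf{8}$ and $\mathbf{1}$ pions; invariance of every Lagrangian term under $V$ then forces an even total count of $\mathbf{3}\cup\bar{\mathbf{3}}$ fields at every vertex, no matter whether it comes from the kinetic, mass, or WZW sector.

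First I would establish that every term of the Chiral Lagrangian is $SU(4)_V$-invariant. The kinetic term and the WZW five-pion term are invariant under the full $SU(4)_L \times SU(4)_R$ (the WZW term up to total derivatives, but manifestly invariant under the vector subgroup), while the mass term $\propto \mathrm{Tr}(U_D+U_D^\dagger)$ with $m_{Q,\alpha\beta}=m\,\delta_{\alpha\beta}$ preserves only the diagonal $SU(4)_V$. Hence every vertex arising from $\mathcal{L}_{\rm dChPT}$ is a singlet under $SU(4)_V$, and in particular under any of its subgroups.

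Next I would single out the $U(1)\subset SU(4)_V$ generated by $T_{15}\propto\mathrm{diag}(1,1,1,-3)$ in the decomposition $SU(4)\supset SU(3)\times U(1)$. By \eqref{eq:adj}, under this $U(1)$ the adjoint splits as $\mathbf{8}_0\oplus\mathbf{1}_0\oplus\mathbf{3}_{+q}\oplus\bar{\mathbf{3}}_{-q}$ with $q=\sqrt{2/3}$, so the $\mathbf{8}\oplus\mathbf{1}$ pions are neutral and the $\mathbf{3}\oplus\bar{\mathbf{3}}$ pions carry equal and opposite nonzero charges. Explicitly, taking $V=e^{i\pi/4}\,\mathrm{diag}(1,1,1,-1)\in SU(4)$, the adjoint action is $\Pi_D \to P\,\Pi_D\,P$ with $P=\mathrm{diag}(1,1,1,-1)$: this fixes the upper $3\times 3$ block and the lower $1\times 1$ entry (the $\mathbf{8}\oplus\mathbf{1}$) but flips the sign of all off-block entries (the $\mathbf{3}\oplus\bar{\mathbf{3}}$). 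Although $V^2=i\,\mathbf{1}$ lies in the center of $SU(4)$ so that $V$ itself has order four in the group, the induced action on $U_D=\exp[2i\Pi_D/f_D]$, and hence on the Lagrangian, is of order two, i.e.\ a genuine $\mathbb{Z}_2$.

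The conclusion then follows: since every term in $\mathcal{L}_{\rm dChPT}$ (including the WZW term) is $V$-invariant and the charged pions pick up a sign, each interaction must involve an even total number of $\mathbf{3}\cup\bar{\mathbf{3}}$ fields. Equivalently, from the continuous $U(1)$ one has $n_+=n_-$ for the number of $\mathbf{3}$ versus $\bar{\mathbf{3}}$ fields at any vertex, so $n_++n_-=2n_+$ is even. The main (modest) obstacle is not a computation but a bookkeeping check, namely ensuring that the WZW term is indeed $SU(4)_V$-invariant as written — this is standard since the WZW action is built to be vectorially invariant — and that no mixed terms (e.g.\ from expanding $U_D$ to high order) escape the $\mathbb{Z}_2$; both follow from the fact that the entire Lagrangian is a functional of $U_D$ with only $SU(4)_V$-invariant traces, hence the symmetry operates term by term without the need to inspect individual vertices.
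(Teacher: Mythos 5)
Your proof is correct and follows essentially the same route as the paper's: the paper's second argument invokes precisely the conserved $U(1)\subset SU(4)_V$ generated by $T_{15}$ under which only the $\mathbf{3}\oplus\bar{\mathbf{3}}$ carry charge, and your explicit $\mathbb{Z}_2$ element $V=e^{i\pi/4}\,\mathrm{diag}(1,1,1,-1)$ is just a particular element of that $U(1)$ (while the induced action $\Pi_D\to P\,\Pi_D\,P$ also reproduces the paper's first, block-structure/trace argument). No gaps.
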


        \begin{proof}
If we parametrize the 15 Nambu-Goldstone bosons as 
\begin{align}
U=\mathrm{exp}\bigg[\frac{2i}{f} \Pi_D \bigg],\qquad \Pi_D=\pi^a_D T^a
\end{align}
with $T^a$ the $SU(4)$ generators, with $\mathrm{Tr}(T^a\cdot T^b)=1/2$, there is a basis such that
\begin{align}
\Pi_D=\left[
\begin{array}{c|c}
\mathbf{8} & \mathbf{3} \\ \hline
\mathbf{3}^{\dagger} &\mathbf{1} 
\end{array}\right].
\end{align}
Note that the $\bar{\mathbf{3}}$ is just the hermitian conjugated of the complex $\mathbf{3}$, since we only have 15 real scalars.         On the other hand, any interaction arising from the chiral Lagrangian will be proportional to the trace of a certain number of $(\partial)\Pi_D$ matrices. 
Let us call 
\begin{align}
\Pi_{\mathbf{3}}=\left[
\begin{array}{c|c}
\mathbf{0}_{3\times 3} & \mathbf{3} \\ \hline
\mathbf{3}^{\dagger} & \mathbf{0} 
\end{array}\right],\qquad \Pi_{\mathbf{8}\oplus \mathbf{1}}=\left[
\begin{array}{c|c}
\mathbf{8} & \mathbf{0}_{3\times 1} \\ \hline
\mathbf{0}_{1\times 3} &\mathbf{1} 
\end{array}\right].
\end{align}
Since $\mathbf{3}$ (and $\mathbf{3}^{\dagger}$) only appears off-diagonal, and we need to have a non-zero trace after the product of a certain number of $(\partial)\Pi_D$ matrices, it is clear that a non-vanishing  amplitude can only involve an even number of $(\partial)\Pi_{\mathbf{3}}$ since this combination  leads to a diagonal matrix, with any odd number of them leading to a traceless off-diagonal one. The result is independent of the basis since any of these terms is invariant under any rotation $V^{\dagger} \Pi_D V$.

We can prove this very same result in a different way, without resorting to any particular representation of the $\mathfrak{su}(4)$ algebra. One can see that there is only a $U(1)$ in $SU(4)$ outside its $SU(3)$ subgroup. Then, under $SU(3)\times U(1)$, the $\mathbf{15}$ of $SU(4)$ decomposes as
\begin{align}
\mathbf{15}=\mathbf{8}_0\oplus \mathbf{1}_0\oplus\left[\mathbf{3}_{\sqrt{\frac{2}{3}}}+\mathrm{h.c.}\right].
\end{align}
Since this extra $U(1)\subset SU(4)$ is conserved, and the $\varphi\sim \mathbf{3}_{\sqrt{\frac{2}{3}}}$ is the only irreducible representation with a non-vanishing $U(1)$ charge, DM interactions from the chiral Lagrangian will always be expressed in terms of powers of $|\varphi|^2$, involving an even number of them.

       \end{proof}
        \begin{corollary}
        \label{cor:su4}
The stable dark pions for the $SU(4)_L\times SU(4)_R\to SU(4)_V$ model can only appear an even number of times in any interaction arising from the chiral Lagrangian.
        \end{corollary}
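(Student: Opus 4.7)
The plan is essentially to recognize that the corollary is an immediate relabeling of Proposition \ref{teor:su4}. First I would recall the identification of the stable dark pions: from the $SU(3)\times U(1)\subset SU(4)_V$ decomposition displayed earlier in the paper, $\mathbf{15} = \mathbf{8}_0 \oplus \mathbf{1}_0 \oplus [\mathbf{3}_{\sqrt{2/3}} + \mathrm{h.c.}]$, and the unbroken DM symmetry $\mathcal{G}_{\rm DM} = U(1)$ (in the $n_f=4$ case) precisely coincides with the extra $U(1) \subset SU(4)$ outside the $SU(3)$ subgroup. The states $\pi_{\rm DM} = \varphi$ are those carrying nonzero charge $\sqrt{2/3}$ under this $U(1)$, while the $\mathbf{8}_0$ and $\mathbf{1}_0$ pions are neutral.

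With this identification in place, I would simply invoke Proposition \ref{teor:su4}: every interaction arising from the chiral Lagrangian (including kinetic, mass, and WZW terms) contains an even number of $\mathbf{3}\oplus\bar{\mathbf{3}}$ fields. Since $\pi_{\rm DM}$ is by definition the content of the $\mathbf{3}\oplus\bar{\mathbf{3}}$, it must appear an even number of times. As a sanity check one may also rephrase the argument in the language of the second (representation-independent) proof of Proposition \ref{teor:su4}: the extra $U(1)\subset SU(4)_V$ is an exact symmetry of the chiral Lagrangian, and $\varphi$ is the unique irrep with non-vanishing charge under it, so $U(1)$ invariance of any interaction forces $\varphi$ to appear only in the combination $|\varphi|^2$, i.e.\ an even number of times.

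There is no real obstacle here — the work was already done in Proposition \ref{teor:su4}. The only thing requiring a sentence of care is ensuring that the $U(1)$ under which $\varphi$ is charged in Proposition \ref{teor:su4} is the same $U(1)$ that stabilizes the DM candidates identified in Section 2, i.e.\ that the ``$\mathbf{3}$'' of the proposition and the ``$\pi_{\rm DM}$'' of the DM discussion refer to the same multiplet. That follows from the uniqueness (up to conjugation) of the charged irrep in the adjoint decomposition, so the corollary is proved in one line once this dictionary is stated.
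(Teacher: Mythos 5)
Your proposal is correct and matches the paper's own proof: the corollary is indeed an immediate consequence of Proposition~\ref{teor:su4} once the stable dark pions are identified with the $\mathbf{3}\oplus\bar{\mathbf{3}}$ (equivalently, the unique $U(1)$-charged irrep) in the adjoint decomposition. The extra remarks about the $U(1)$ charge simply restate the second proof of the proposition, so nothing further is needed.
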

        \begin{proof}
        Since the stable dark pions are transforming in the $\mathbf{3}$ of $SU(3)$, it follows from Propopsition~\ref{teor:su4} that one can only get an even number of them in any  interaction arising from the chiral Lagrangian.
        \end{proof}
        
        It is easy to see, this is no longer true for bigger cosets, as e.g. $SU(5)_L\times SU(5)_R\to SU(5)_V$. 
        Indeed, in this case,  the DM symmetry group is $SU(2)\times U(1)$ so that the 24 Nambu-Goldstones of $SU(5)$ can be decomposed under $SU(3)\times 
        SU(2)\times U(1)$ as~\footnote{We have taken a prefactor $ 1/(2\sqrt{15})$ out of the $U(1)$ charges to simplify the discussion. } 
\begin{align}
\mathbf{24}=(\mathbf{8},\mathbf{1})_{0}\oplus (\mathbf{1},\mathbf{1})_0 \oplus (\mathbf{1},\mathbf{3})_0\oplus \left[(\mathbf{3},\mathbf{2})_{5}+\mathrm{h.c.}\right].
\end{align}
Now, in principle, the $5^2-10=15$ stable dark pions are $\varphi\sim (\mathbf{3},\mathbf{2})_5$ and $\phi\sim (\mathbf{1},\mathbf{3})_0$. Since the $SU(3)\subset SU(5)$ is broken by the portal with the visible sector, one can write down a $SU(2)\times U(1)$ invariant  using e.g. $\sim \big(\varphi^{\dagger}\phi\, \varphi \big)_{\mathbf{1}}$.

\section{Details on the relic abundance calculation}
\label{sec:ap_relic}

The amplitude for the $2\to 2$ process  reads
\begin{align}
\mathcal{M}(\pi_D^a \pi_D^b\to \pi_D^c \pi_D^d)&=\frac{2 m_{\pi_D}^2}{3f_D^2 }\Big\{\frac{2}{n_f}\left(\delta^{ad}\delta^{bc}+\delta^{ac}\delta^{bd}+\delta^{ab}\delta^{cd}\right)+\left(d^{adm}d^{bcm}+d^{bdm}d^{acm}+d^{cdm}d^{abm}\right)\Big\}\nonumber\\
&+\frac{4}{3f_D^2}\Big\{(-4m_{\pi_D}^2+2s+t)f^{acm}f^{bdm}+(-4m_{\pi_D}^2+s+2t)f^{abm}f^{cdm}\nonumber\\
&+(s-t)f^{adm}f^{bcm}\Big\},
\end{align}
where $s,t,u$ are the usual Mandelstam variables and the indices $a,b,c,d,\ldots\in \{1,\ldots,n_f^2-1\}$. This  leads to
\begin{align}
( \sigma v)_{\rm lab}(\pi_D^a\pi_D^b\to \pi_D^c\pi_D^d)=\sigma(ab\to cd) \epsilon^{1/2}+\mathcal{O}(\epsilon^{3/2}),\quad \epsilon=\frac{s-4 m_{\pi_D}^2}{4m_{\pi_D}^2}=\bigg(\frac{v}{2}\bigg)^2\bigg(1-\left(\frac{v}{2}\right)^2\bigg),
\end{align}
with
\begin{align}
\sigma(ab\to c d)&=\frac{m_{\pi_D}^2}{72\pi f_D^4 n_f^2}\Big(n_f\big(d^{adm}d^{bcm}+d^{bdm}d^{acm}+d^{cdm}d^{abm}+8f^{adm}f^{bcm}+8f^{acm}f^{bdm}\big)\nonumber\\
&+2\big(\delta^{ad}\delta^{bc}+\delta^{ac}\delta^{bd}+\delta^{ab}\delta^{cd}\big)\Big)^2.
\end{align}
Taking into account that~\cite{Gondolo:1990dk}
\begin{align}
\langle \sigma v\rangle \approx \frac{2x^{3/2}}{\sqrt{\pi}}\int_0^{\infty}(\sigma v)_{\rm lab}\epsilon^{1/2} \exp(-x\epsilon)d\epsilon,
\end{align}
where $x=m_{\pi_D}/T$, as well as
\begin{align}
\frac{2x^{3/2}}{\sqrt{\pi}}\int_0^{\infty}\epsilon \exp(-x\epsilon)d\epsilon=\frac{2}{\sqrt{\pi}\sqrt{x}}\approx \frac{2 v}{\sqrt{3\pi}}, 
\end{align}
we obtain
\begin{align}
\langle \sigma v\rangle_{ab\to cd} &\approx \frac{m_{\pi_D}^2}{36\pi^{3/2} f_D^4 n_f^2\sqrt{x}}\Big(n_f\big(d^{adm}d^{bcm}+d^{bdm}d^{acm}+d^{cdm}d^{abm}+8f^{adm}f^{bcm}+8f^{acm}f^{bdm}\big)\nonumber\\
&+2\big(\delta^{ad}\delta^{bc}+\delta^{ac}\delta^{bd}+\delta^{ab}\delta^{cd}\big)\Big)^2.
\end{align}
For the particular case of $\mathfrak{su}(4)$, thermally averaged cross-sections involving an odd number of stable dark pions will vanish. So at the end of the day, the relevant co-annihilation cross section for this case is the one of $2$ stable dark pions going into two transient ones, 
\begin{align}
\langle \sigma v\rangle_{2_{\rm DM}\to 2_{\rm tran}}&=\frac{1}{2}\sum_{ij\in\mathcal{I}}\sum_{ab\in\mathcal{J}}\frac{\langle \sigma v\rangle_{ijab}}{N_{\rm DM}^2}\stackrel{n_f=4}{=}\frac{m_{\pi_D}^2}{\pi^{3/2} f_D^4 \sqrt{x}}\left[\frac{1171}{576}\right]\approx\frac{2m_{\pi_D}^2}{\pi^{3/2} f_D^4 \sqrt{x}}\nonumber\\
&\approx \frac{2m_{\pi_D}^2}{\sqrt{3}\pi^{3/2} f_D^4 }v=\sigma_0 v,\qquad \sigma_0=\frac{2m_{\pi_D}^2}{\sqrt{3}\pi^{3/2}f_D^4},
\end{align}
where stable dark pions have indices $\in\mathcal{I}$, and the set of indices of transient dark pions is denoted by $\mathcal{J}$. In particular, $\mathrm{card}(\mathcal{I})=N_{\rm DM}$, $\mathrm{card}(\mathcal{J})=N_{\rm tran}$ and $N_{\rm DM}+N_{\rm tran}=n_f^2-1=N_{\rm \pi}$.

On the other hand,  the thermally averaged cross-section for any 5-pion process is given by~\cite{Hochberg:2014kqa}
\begin{align}
    \langle \sigma v^2\rangle_{ijk\to lm}=\frac{m_{\pi}^5 N_d^2 T_{\{ijklm\}}^2}{96\sqrt{5}\pi^5 f_D^{10} x^2} 
\end{align}
 where $\{\ldots\}$ stands for sorting, i.e., $\{1,5,2,6,8\}=\{1,2,5,6,8\}$ and the coupling $T_{abcde}$ is given by Eq.~\eqref{eq:pionSIexp}. Again, in the $n_f=4$ case, only co-annihilation cross sections involving an even number of stable dark pions will be non-vanishing. In particular, these read
 \begin{align}
\langle \sigma v^2\rangle_{3_{\rm DM}\to 1_{\rm DM} 1_{\rm tran}}&=\frac{4}{4!}\sum_{ijkl\in \mathcal{I}}\sum_{a\in\mathcal{J}}\frac{\langle \sigma v^2\rangle_{ijkla} }{N_{\rm DM}^3}=\frac{ \sigma_5 }{N_{\rm DM}^3},\\
\langle \sigma v^2\rangle_{2_{\rm DM}1_{\rm tran} \to 2_{\rm tran}}&=\frac{3}{3! 2!}\sum_{ij\in \mathcal{I}}\sum_{abc\in\mathcal{J}}\frac{\langle \sigma v^2\rangle_{ija bc}}{N_{\rm DM}^2 N_{\rm tran}}=\frac{3\sigma_5   }{N_{\rm DM}^2 N_{\rm tran}},\\
\langle \sigma v^2\rangle_{3_{\rm tran}\to 2_{\rm DM}}&=\frac{1}{3!2!}\sum_{ij\in\mathcal{I}}\sum_{a b c \in\mathcal{J}}\frac{\langle \sigma v^2\rangle_{abc ij}}{N_{\rm tran}^3}=\frac{\sigma_5 }{N_{\rm tran}^3 },\\  
           \langle \sigma v^2\rangle_{3_{\rm tran}\to 2_{\rm tran}}&=\frac{10}{5!}\sum_{abcde\in\mathcal{J}}\frac{\langle \sigma v^2\rangle_{abcde} }{N_{\rm tran}^3}=\frac{5\sigma_5 }{2 N_{\rm tran}^3},\\
              \langle \sigma v^2\rangle_{1_{\rm DM}2_{\rm tran}\to 1_{\rm DM}1_{\rm tran}}&=\frac{9}{3!2!}\sum_{abc\in\mathcal{J}}\sum_{ij\in\mathcal{I}}\frac{\langle \sigma v^2\rangle_{abcij} }{N_{\rm DM}N_{\rm tran}^2}           =\frac{9\sigma_5}{N_{\rm DM}N_{\rm tran}^2},\\
           \langle \sigma v^2\rangle_{2_{\rm DM}1_{\rm tran}\to 2_{\rm DM}}&=\frac{6}{4!}\sum_{a\in\mathcal{J}}\sum_{ijkl\in\mathcal{I}}\frac{\langle \sigma v^2\rangle_{aijkl} }{N_{\rm DM}^2 N_{\rm tran}}=\frac{3 \sigma_5}{2 N_{\rm DM}^2 N_{\rm tran}}.
           \end{align}

     where 
     \begin{align}
\sigma_{5}=\frac{25 m_{\pi}^5 N_d^2}{32\sqrt{5}\pi^5 f_D^{10} x^2}.
     \end{align}
 
The Boltzmann equations become
  \begin{align}
            \frac{1}{a^3}\frac{d}{dt}\big(n_{\rm DM} a^3\big)=&-\langle \sigma v\rangle_{2_{\rm DM}\to 2_{\rm tran}}\left[n_{\rm DM}^2-(n_{\rm DM}^2)_{\rm eq} \right]\nonumber\\
            &      -2\langle \sigma v^2 \rangle_{3_{\rm DM}\to 1_{\rm DM} 1_{\rm tran}}\Big[n_{\rm DM}^3-\Bigg(\frac{n_{\rm DM}^2}{n_{\rm tran}} \Bigg)_{\rm eq} n_{\rm DM} n_{\rm tran}\Big]\nonumber   \\
      &-2\langle \sigma v^2\rangle_{2_{\rm DM} 1_{\rm tran}\to 2_{\rm tran}} \Big[n_{\rm DM}^2n_{\rm tran}-\Bigg(\frac{n_{\rm DM}^2}{n_{\rm tran}} \Bigg)_{\rm eq}  n_{\rm tran}^2\Big]\nonumber\\
      &+2\langle \sigma v^2\rangle_{3_{\rm tran}\to 2_{\rm DM}} \Big[n_{\rm tran}^3-\Bigg(\frac{n_{\rm tran}^3}{n_{\rm DM}^2} \Bigg)_{\rm eq}  n_{\rm DM}^2\Big],\\
            \frac{1}{a^3}\frac{d}{dt}\big(n_{\rm tran} a^3\big)=&+\langle \sigma v\rangle_{2_{\rm DM}\to 2_{\rm tran}}\left[n_{\rm DM}^2-(n_{\rm DM}^2)_{\rm eq} \right]\nonumber\\
   &   -\langle \sigma v^2 \rangle_{3_{\rm tran}\to 2_{\rm tran}}\Big[n_{\rm tran}^3-(n_{\rm tran})_{\rm eq} n_{\rm tran}^2\Big]\nonumber   \\
      &-3\langle \sigma v^2\rangle_{3_{\rm tran} \to 2_{\rm DM}} \Big[n_{\rm tran}^3-\Bigg(\frac{n_{\rm tran}^3}{n_{\rm DM}^2} \Bigg)_{\rm eq}  n_{\rm DM}^2\Big]\nonumber\\
      &-\langle \sigma v^2\rangle_{ 1_{\rm DM}2_{\rm tran}\to 1_{\rm DM}1_{\rm tran}} \Big[n_{\rm tran}^2 n_{\rm DM}-(n_{\rm tran})_{\rm eq} n_{\rm tran} n_{\rm DM}\Big]\nonumber\\
      &+\langle \sigma v^2\rangle_{2_{\rm DM}1_{\rm tran} \to 2_{\rm tran}} \Big[n_{\rm tran}n_{\rm DM}^2 -\Bigg(\frac{n_{\rm DM}}{n_{\rm tran}}\Bigg)_{\rm eq} n_{\rm tran}^2\Big]\nonumber\\
       & -\langle \sigma v^2 \rangle_{2_{\rm DM}1_{\rm tran}\to 2_{\rm DM}}\Big[n_{\rm DM}^2n_{\rm tran}-(n_{\rm tran})_{\rm eq} n_{\rm DM}^2\Big]\nonumber   \\
    &  +\langle \sigma v^2\rangle_{3_{\rm DM} \to 1_{\rm DM}1_{\rm tran}} \Big[n_{\rm DM}^3-\Bigg(\frac{n_{\rm DM}^2}{n_{\rm tran}} \Bigg)_{\rm eq}  n_{\rm DM}n_{\rm tran}\Big]-\Gamma(\pi_{\rm tran}) n_{\rm tran}.
           \end{align}

\section{Details on the recast of LHC searches}
\label{app:selection}
\begin{table}[!th]
\centering
\resizebox{1.\textwidth}{!}{%
  \begin{tabular}{|c|c|c|}
        \hline
        \textbf{four jets} & t\textbf{wo jets plus MET} & \textbf{semi-visible} \\
        \hline 
        $N_{j} \geq 4$ with $p_T > 80$~GeV, $|\eta|< 2.5$ & $N_{jet}\geq 2$ with $p_T > 30$~GeV, $|\eta|< 2.4$ & $N_{jet}\geq 2$ with $p_T > 30$~GeV, $|\eta|< 2.8$\\
        \hline
        $H_T < 1050$~GeV or $\geq 1$ jet with $p_T > 550$~GeV & $H_T > 300$~GeV in $|\eta| < 2.4 $ and $H_T^{miss} > H_T$ & $H_T > 600$~GeV and leading jet $p_T > 250$~GeV \\
        \hline
         & $H_T^{miss} > 300$~GeV in $|\eta| < 5$ & $H_T^{miss} > 600$~GeV\\
         \hline
         & no isolated leptons with $p_T > 10$~GeV & no leptons with $p_T > 7$~GeV \\
         \hline
         & & $\leq 1$ b-jet \\
         \hline
         & $\Delta\phi(\vec{H}_T^{miss},j_{1,2}) > 0.5$ and $p_T$ jets $\Delta\phi(\vec{H}_T^{miss},j_{>2}) > 0.3$ & as least one jet with $\Delta\phi(\vec{H}_T^{miss},j) = 2$\\
         \hline
         $\Delta R_{1,2} < 2$, $\Delta\eta < 1.1$ and asymmetry $< 0.1$ & & \\
         \hline
    \end{tabular}%
}
\caption{Signal selection criteria for four jets~\cite{CMS:2022usq}, two jets plus missing energy~\cite{CMS:2019zmd} and semi-visible jets searches~\cite{ATLAS:2022aaw}. }
\label{table:selection} 
\end{table}

The selection criteria used for the recast of the non-resonant dijet, jets plus missing energy and semi-visible jets searches are given in table~\ref{table:selection}. We discuss them here in more detail.

In the non-resonant dijet search events are selected if they have $N_{jets} \geq 4$ with $p_T > 80$~GeV, $|\eta| < 2.5$, the scalar $p_T$ sum $H_T > 1050$~GeV or at least one jet with $p_T > 550$. In addition to these selection criteria the four hardest jets are combined to two dijets based on the combination that minimizes $\Delta R = |\left(\Delta R_1-0.8\right)|+|\left(\Delta R_2-0.8\right)|$, where $\Delta R_i$ is the $\eta-\phi$ separation between two jets. Based on the minimizing combination signal events are also required to have $\Delta R_{1,2}<2$, $\Delta\eta = |\eta_1-\eta_2| < 1.1$ and an asymmetry $= \frac{|m_1-m_2|}{m_1+m_2} < 0.1$. The small asymmetry requirement is based on the assumption that two close jets originate from the same resonance, so that the two dijets are close in mass. This is not the case here, however, because, if combined correctly, one jet in a dijet contains stable dark pions. Thus, this requirement reduces the efficiency of this search for our scenario drastically as can be seen by the purple lines in Fig.~\ref{fig:efficiency}. For large dark pion lifetimes the $H_T$ and $p_T$ requirements are no longer fulfilled and the efficiency decreases even further.

The two jets plus MET search at CMS the requires at least two jets $N_{jet}\geq 2$ with $|\eta| < 2.4$ and $p_T > 30$~GeV. Additional requirements demand that the scalar $p_T$ sum $H_T > 300$~GeV for jets with $|\eta| < 2.4$ and the absolute value of the negative $p_T$ vector sum $H_T^{miss} > 300$~GeV for jets with $|\eta|<5$. To avoid mismeasurement events need to fulfill $H_T^{miss} < H_T$. Events with isolated electrons (muons) with $I < 0.1 (0.2)$, where $I$ is the scalar $p_T$ sum of hadrons and photons in a cone around the leptons over the letpon $p_T$ are excluded. Finally, an azimuthal angle between $\vec{H}_T^{miss}$ and two highest $p_T$ jets $\Delta\phi(\vec{H}_T^{miss},j_{1,2}) > 0.5$ and for lower $p_T$ jets $\Delta\phi(\vec{H}_T^{miss},j_{>2}) > 0.3$ is required for signal events.  Based on these selection criteria the search remains efficient even at small lifetimes, as stable dark pions produced in jets originating from dark quarks are enough to fulfill the CMS selection criteria. At around $10$~mm the efficiency increases, since more and more dark pions are stable on collider scales and the criteria on missing energy and $\Delta\Phi$ are fulfilled for more events. \\
Finally, for the semi-visible search selection criteria  on the number of jets, $N_{jet}\geq 2$ with $p_T > 30$~GeV and $|\eta|< 2.8$, the scalar $p_T$ sum $H_T > 600$~GeV, leading jet momentum $p_T > 250$~GeV and missing energy $H_T^{miss} > 600$~GeV are used. Moreover, no muons with $p_T > 7$~GeV in $|\eta|< 2.5$ and no electrons with $p_T > 7$~GeV in $|\eta|< 2.5$ and $1.37< |\eta| < 1.52$ are allowed for signal events. Events with more than one b-jet are rejected and at least one jet with $\Delta\phi(\vec{H}_T^{miss},j) = 2$ is required to pass the signal selection. From Fig.~\ref{fig:efficiency} it can be seen that this search has the smallest efficiencies. Because dark quarks are mainly produced via t-channel mediator exchange they lead to low momentum dark jets, so that the $H_T^{miss}$ requirement rules out most events. As expected the search becomes more efficient for larger dark pion lifetimes, since, similarly to the jets plus MET search, the MET requirements becomes more often fulfilled the more dark pions become stable on collider scales. An interesting behaviour can be seen in the case for couplings to down quarks: Dark pions with lower masses can be searched for more efficiently. This can be explained by the veto on b-jets: As soon as the b-channel is kinematically allowed most dark pions decay to b quarks leading to the decrease of the efficiency. The limits are given for six values of $R_{inv}$, which in our model increases as the lifetime of the transient dark pions increases. In practice we determine $R_{inv}$ as the fraction of produced dark pions that are either stable or decay outside of the hadronic calorimeter based on simulated events. We find $R_{inv}$ values ranging from $\sim0.3$ for prompt decays to $\sim0.95$, when transient dark pions are long-lived, and apply the appropriate limits.

\section{Additional figures}
\label{app:xsec}
Fig.~\ref{fig:xseclimit} shows the upper limits on the cross section for the non-resonant dijet search~\cite{CMS:2022usq} and jets plus MET~\cite{CMS:2019zmd} (left) and the cross section limits of the semi-visible search~\cite{ATLAS:2022aaw} for $R_{inv}=0.2$, 0.4, 0.6 and 0.8 (right).
\begin{figure}[!h]
    \centering
    \includegraphics[width=0.47\linewidth]{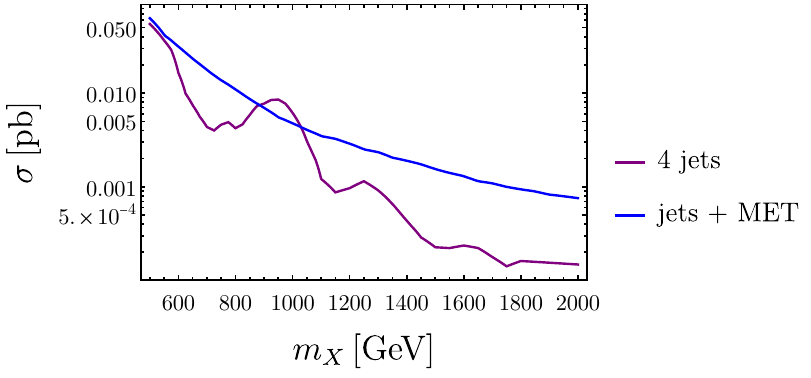}
    \quad
    \includegraphics[width=0.46\linewidth]{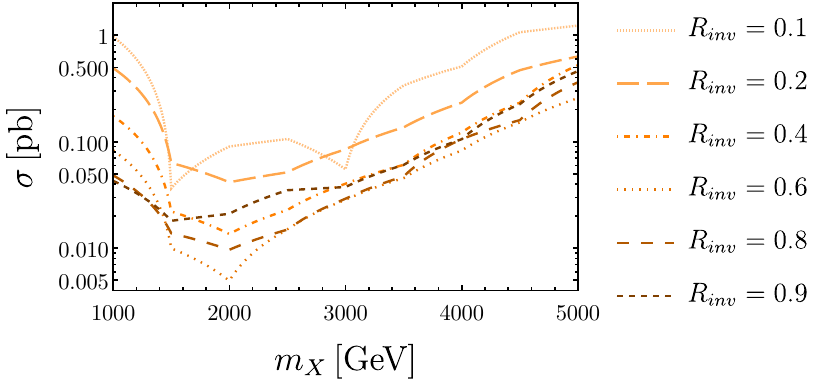}
    \caption{Left: Observed upper limit on cross section for the non-resonant dijet search (purple) in~\cite{CMS:2022usq} and the CMS two jets plus missing energy search (blue) from~\cite{CMS:2019zmd} for the lowest neutralino mass of $\sim 6$~GeV. 
    Right: Observed upper limit on cross section for the ATLAS semi-visible search~\cite{ATLAS:2022aaw} for different values of $R_{inv}$.}
    \label{fig:xseclimit}
\end{figure}  

In Fig.~\ref{fig:mxlimit01} we show the upper limits on the mediator mass as a function of the dark pion lifetime, same as in Fig.~\ref{fig:mX_limit}, but for $\kappa=0.1$. Due to the $\kappa^4$ suppression of the t-channel mediator exchange production cross section the fiducial cross section for the semi-visible search is always below the experimental limit.
\begin{figure}[!h]
    \centering
    \includegraphics[width=0.48\textwidth]{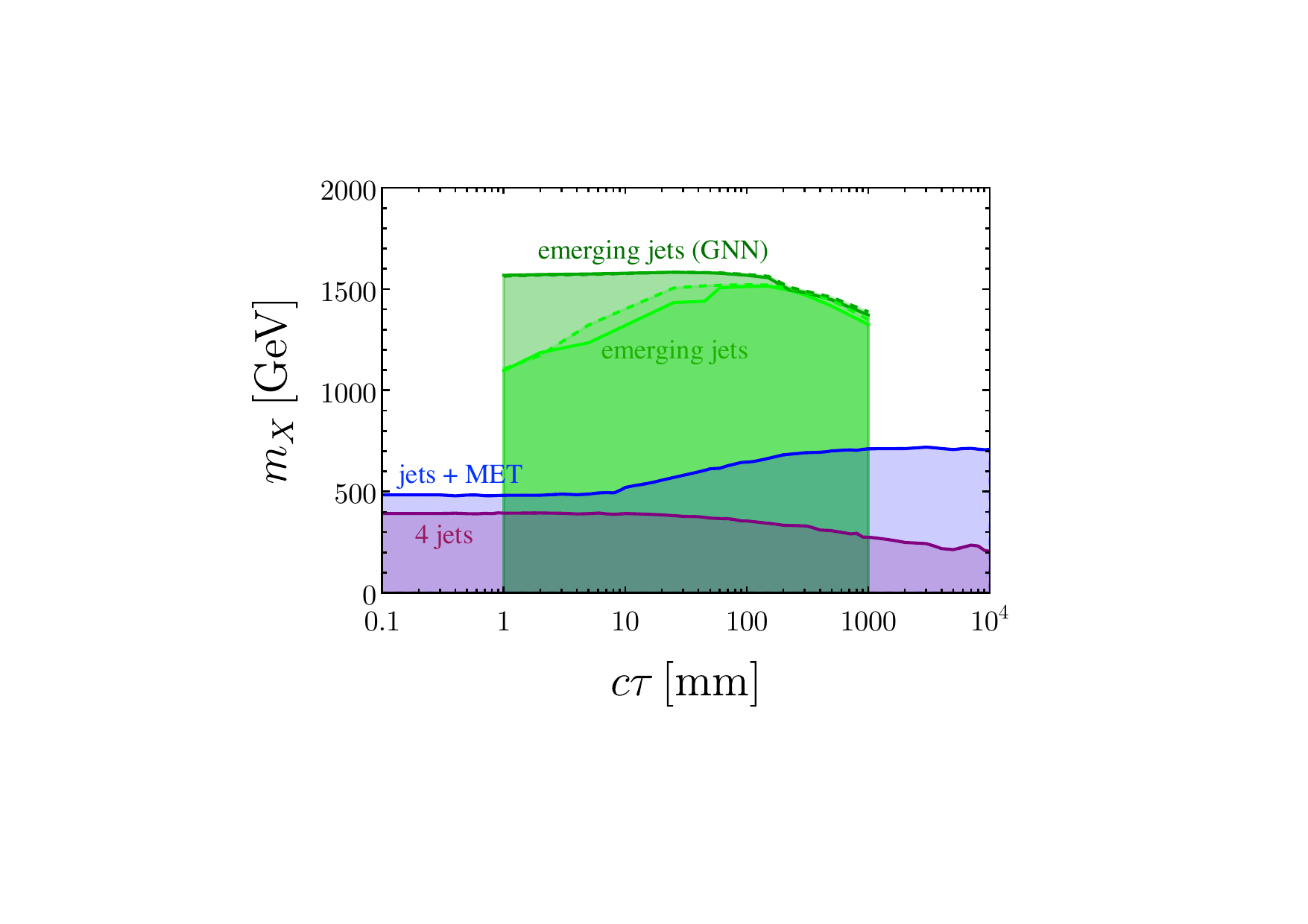}\quad
\includegraphics[width=0.48\textwidth]{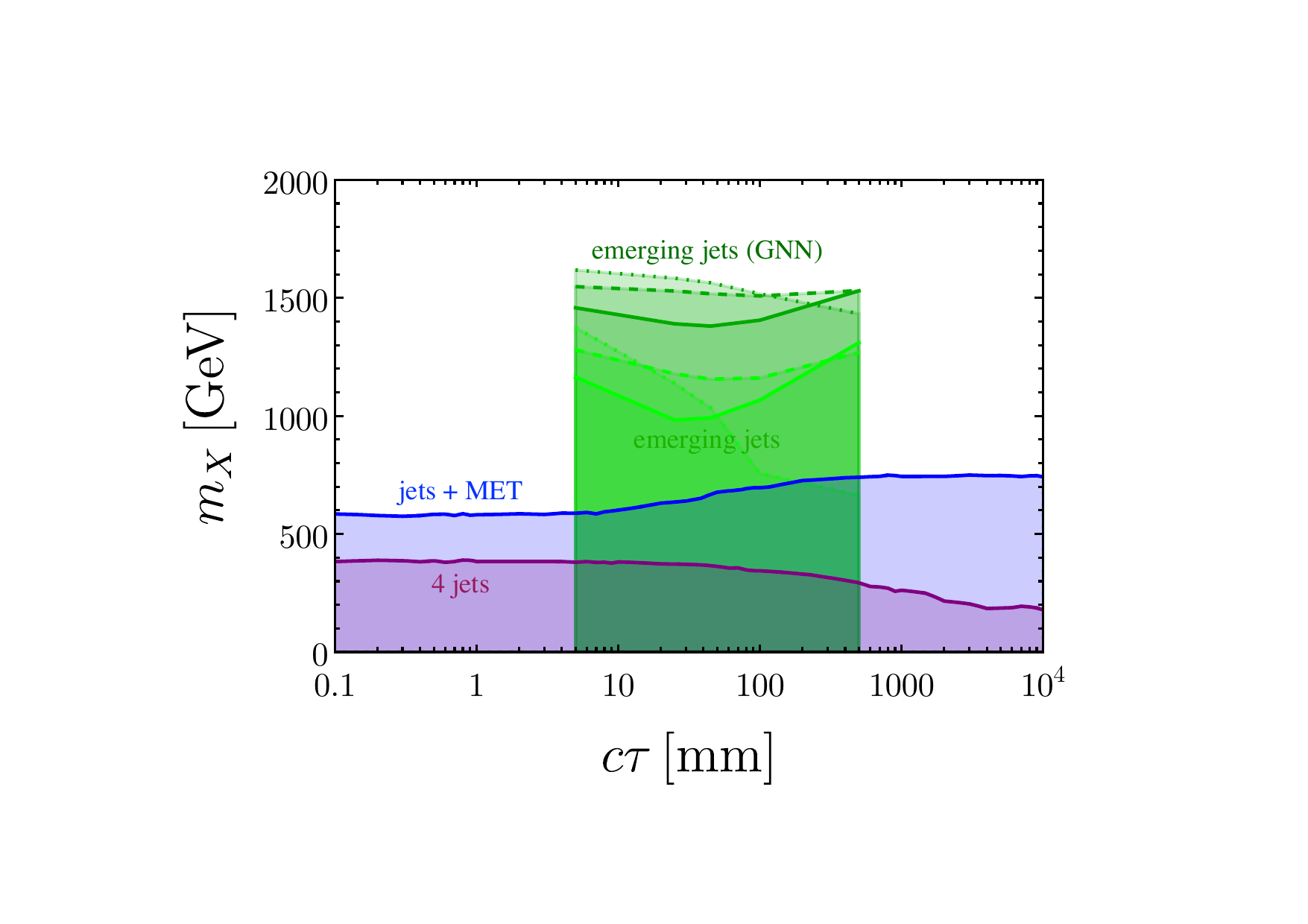}
    \caption{Upper limit on $m_X$ from the emerging jet search (solid, dashed and dotted green and dark green show the limits from the model agnostic and GNN search with the full run data set for $m_{\pi_D} = 10,\, 20,\, 6$ GeV), MET search (blue) and four jets search (purple) for couplings $\kappa = 0.1$ and $m_{\pi_D} = 1 $\,GeV to up-type (left) and down-type quarks (right).
    }
    \label{fig:mxlimit01}
\end{figure}

Finally, Figs.~\ref{fig:mx_mpi15} and~\ref{fig:mx_mpi15_down} show the combined collider, direct detection and flavour constrains in the $\mpid-m_X$ frame for $\kappa = 1$ and 0.1, respectively, for couplings to up-type (left) and couplings to down-type quarks (right). In both figures $f_D = 15\mpid$ was chosen.
\begin{figure}[!h]
    \centering
    \includegraphics[width=0.47\linewidth]{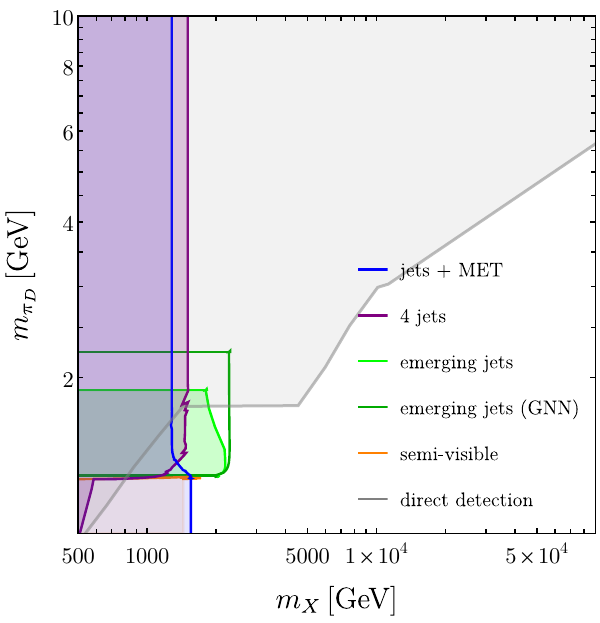}\quad
    \includegraphics[width=0.47\linewidth]{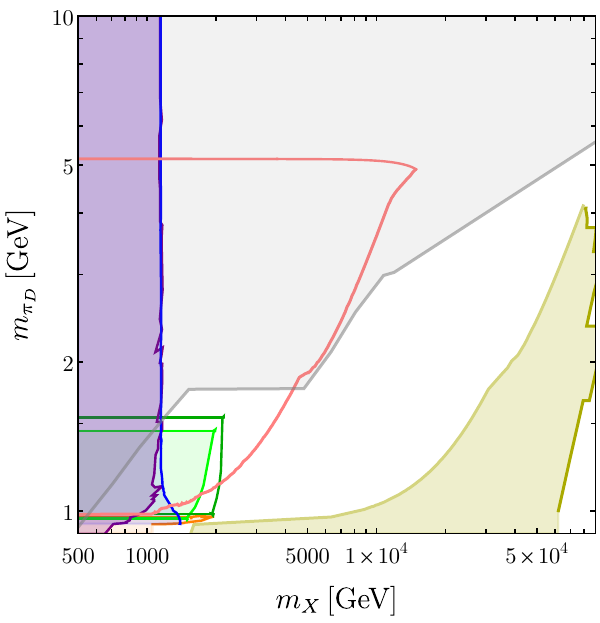}
    \caption{Same as Fig.~\ref{fig:mx_mpi1}, but for $f_D=15\mpid$.}
    \label{fig:mx_mpi15}
\end{figure}

\begin{figure}
    \centering
    \includegraphics[width=0.47\linewidth]{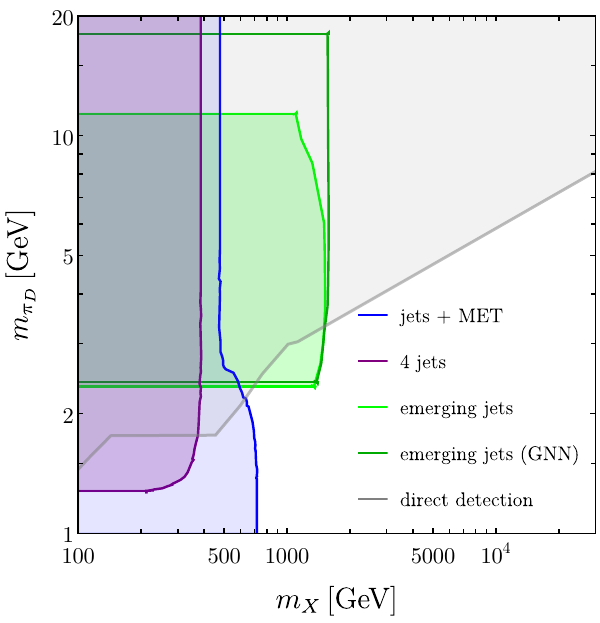}\quad
    \includegraphics[width=0.47\linewidth]{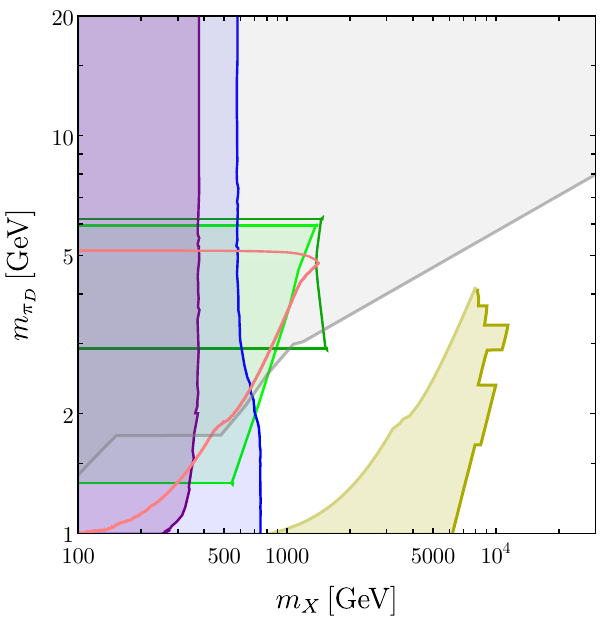}
    \caption{Same as Fig.~\ref{fig:mx_mpi1_down} but for $f_D=15\mpid$.}
    \label{fig:mx_mpi15_down}
\end{figure}
\clearpage
        \bibliography{refs}{}
\bibliographystyle{JHEP}
\end{document}